\newtheorem{theorem}{\textbf{Theorem}} 
\newtheorem{lemma}{\textbf{Lemma}}
\newtheorem{definition}{\textbf{Definition}}
\begin{document}
\title{Minimum Length Scheduling for Discrete-Rate Full-Duplex Wireless Powered Communication Networks
%\thanks{This work is supported by Scientific and Technological Research Council of Turkey Grant $\#$117E241.}
}
\author{Muhammad Shahid~Iqbal, Yalcin Sadi,~\IEEEmembership{Member,~IEEE}
        and~Sinem Coleri,~\IEEEmembership{Senior Member,~IEEE}% <-this % stops a space
\thanks{Muhammad Shahid~Iqbal and Sinem Coleri are with the department of Electrical and Electronics Engineering, Koc University, Istanbul, Turkey, email: $\lbrace$miqbal16, scoleri$\rbrace$@ku.edu.tr. Yalcin Sadi is with the department of Electrical and Electronics Engineering, Kadir Has University, Istanbul, Turkey, email: yalcin.sadi@khas.edu.tr.

This work is supported by Scientific and Technological Research Council of Turkey Grant $\#$117E241. 

This work is partially presented at the $18^{th}$ International Conference on Ad Hoc Networks and Wireless (AdHoc-Now 2019).
}}% <-this % stops a space
\maketitle
\begin{abstract}
In this paper, we consider a wireless powered communication network where multiple users with RF energy harvesting capabilities communicate to a hybrid energy and information access point (HAP) in full-duplex mode. Each user has to transmit a certain amount of data with a transmission rate from a finite set of discrete rate levels, using the energy initially available in its battery and the energy it can harvest until the end of its transmission.  Considering this model, we propose a novel discrete rate based minimum length scheduling problem to determine the optimal power control, rate adaptation and transmission schedule subject to data, energy causality and maximum transmit power constraints. The proposed optimization problem is proven to be NP-hard which requires exponential-time algorithms to solve for the global optimum. As a solution strategy, first, we demonstrate that the power control and rate adaptation, and scheduling problems can be solved separately in the optimal solution. For the power control and rate adaptation problem, we derive the optimal solution based on the proposed minimum length scheduling slot definition. For the scheduling, we classify the problem based on the distribution of minimum length scheduling slots of the users over time. For the non-overlapping slots scenario, we present the optimal scheduling algorithm. For the overlapping scenario, we propose a polynomial-time heuristic scheduling algorithm.
\end{abstract}
\begin{IEEEkeywords}
Energy harvesting, wireless powered communication networks, full duplex networks, rate adaptation, power control, scheduling.
\end{IEEEkeywords}
\IEEEpeerreviewmaketitle
%\begin{figure}
%\centering
%\currentpage
%\oddpagelayouttrue
%\pagedesign
%\caption{page layout for this document}
%\end{figure}
\section{Introduction} \label{sec:intro}
Due to easy maintenance, low installation cost and flexibility, time critical wireless sensor networks are being rapidly adopted in cyber-physical and emergency alert systems \cite{wncs_sinem, intravehicle_sinem}. To meet the traffic demand of such battery limited networks, minimum length scheduling is very important but remained under cover except few studies \cite{wncs_ref83, wncs_ref194}. The recent advances in the energy harvesting technologies have offered a great potential to overcome the battery replacement and to provide a perpetual energy. Natural resources such as sun, wind and vibration are the possible sources of energy harvesting \cite{natural_1, natural_2} but their randomness and environmental dependence is the bottleneck for their implementation in such networks. Similarly, inductive or magnetic resonant coupling technologies are also infeasible for such networks due to large size, calibration issues and short energy transfer range. Radio Frequency (RF) based energy harvesting technology is capable to master these hurdles with its high range, small form factor and full control on energy transfer \cite{harvest_10}. In RF energy harvesting, radio signals with frequency range $3$ kHz-$300$ GHz are used as a medium to carry energy in the form of electromagnetic radiation. 

Basically, there are two operational models for RF energy harvesting known as simultaneous wireless information and power transfer (SWIPT) and wireless powered communication networks (WPCN). In SWIPT, the same RF signal is used for both wireless information transfer and wireless power transfer. For a single user SWIPT, the trade-off between information transmission rate and energy transfer efficiency have been studied for additive white Gaussian noise channel \cite{harvest_06}, flat fading channel \cite{harvest_61}, and a non linear energy harvesting model \cite{harvest_new63}. The multi-user SWIPT networks are studied, to minimize the base station transmit power \cite{harvest_59} or to maximize the energy transfer under a minimum signal to noise ratio (SNR) constraint  \cite{harvest_19, harvest_60}. The data buffer constrained throughput maximization is studied in \cite{harvest_new62}. The scheduling mechanism in these SWIPT based models is missing except \cite{harvest_01_ref146} where the scheduling is considered in a different context i.e. a downlink multi-user scheduling for time slotted system is proposed in which the HAP have to transmit data to all the users in the downlink and the information recipient user is determined for each time slot. The scheduling algorithm is proposed for the selection of this single information recipient user in each time slot. 
 %harvest_42,
 
In WPCN, ``Harvest-then-transmit" is the introductory protocol, where all the users first harvest the RF energy broadcost by the hybrid access point (HAP) in the downlink and then transmit their information to the HAP in the uplink by using time division multiple access \cite{harvest_07}. The objective of this study is to maximize the sum throughput by jointly optimizing the time allocation under a total time constraint. This sum throughput maximization results in an unfair time allocation among the users by supporting the near users due to their favouring channel conditions and low distance. For example, the user with maximum SNR can occupy the total time which will result in a maximum throughput but the users with high path loss will be struggling to get access to the channel. To overcome this unfair resource allocation some studies have focused on the alternative objective functions such as maximization of weighted sum rate \cite{harvest_44, harvest_55}, total effective throughput maximization \cite{effective_throughput}, minimum throughput maximization \cite{harvest_04}, and energy efficiency maximization \cite{harvest_51,Energy_efficiency_Max}. In such half duplex systems, all the users have equal energy harvesting time therefore, the transmission order is not important and scheduling is not required for such system models. Although, in these studies the HAP is either restricted to transmit a constant power \cite{harvest_19, harvest_07, harvest_39_ref17,harvest_04,harvest_56} or its maximum and average transmit power is constrained by a limit \cite{harvest_50, harvest_44, harvest_55,SecureWPCN}, no maximum transmit power constraint for the users is considered. Furthermore, these studies consider the continuous data rate which is hard to realize practically. Also, the initial battery level of the users is not considered in these works, except \cite{harvest_51,LongTermThroughput}.

Recently, the WPCN has started to incorporate the full duplex technology to achieve high spectral and energy transfer efficiency by allowing the HAP to transmit energy and receive information simultaneously in the same frequency band. Self-Interference is the major setback for such co-time and co-frequency transmissions in which a part of the transmitted signal is received by itself, thus causing an interference to the received signal. Thanks to the recent self-interference cancellation (SIC) techniques \cite{harvest_41_ref26, harvest_41_ref27} and their practical implementations \cite{harvest_30_ref19, harvest_30_ref26}, full duplex has become the key transceiving technique for the 5G networks \cite{harvest_new66}. The main objective of the full duplex WPCN is to maximize the sum throughput by assuming either only the HAP is operating in full duplex mode \cite{harvest_30, harvest_40, harvest_50} or the HAP and users both are operating in the full duplex mode \cite{harvest_41, harvest_new68} under the perfect SIC environment. The studies in \cite{harvest_40, harvest_41, harvest_50} consider the residual self interference proportional to the HAP transmit power. Only \cite{harvest_30} considers the schedule length minimization given the traffic demand of the users. In full duplex systems, since the users are capable to harvest the energy during their own and other users data transmission time, so all the users have different energy harvesting time. This uneven energy harvesting time necessitates an efficient user ordering to minimize the scheduling length. However, the foregoing studies assume the pre-determined transmission order without incorporating any scheduling algorithm for an impracticable continuous data rate assumption. Furthermore, none of the mentioned studies have considered maximum user transmit power limitation, while assuming a constant HAP power \cite{harvest_30, harvest_40} or a maximum HAP power constraint \cite{harvest_50, harvest_41}. Moreover, these studies have assumed that all the harvested power should be used in the same transmission frame without considering the initial battery level of the users.      

The goal of this paper is to determine optimal rate adaptation, power control and time allocation, and transmission order with the objective to minimize the schedule length for a realistic in-band full-duplex WPCN model. The original contributions of this paper are listed below:

\begin{itemize}
\item We propose a new discrete rate WPCN optimization framework for a realistically-modeled in-band full-duplex energy harvesting network, employing the energy causality constraint and maximum user transmit power constraint which considers the initial battery levels and the energy storage capability of the users based on the discrete-rate transmission model.
\item We characterize the Discrete Rate Minimum Length Scheduling Problem ($\cal{DR-MLSP}$) to ascertain the optimal rate adaptation, power control, and schedule which minimizes the transmission completion time of all the users subject to traffic demand, maximum user transmit power and energy causality constraints based on the proposed discrete rate WPCN model. We formulate the problem mathematically as a mixed integer non-linear programming (MINLP) problem and proved that the problem is NP-Hard.
\item We formulate the power control and rate adaptation problem for a pre-determined transmission order. The optimal solution is presented based on the proposed minimum length scheduling slot definition.
\item For the scheduling problem, we define the minimum length scheduling (MLS) slot as the slot which yields minimum transmission completion time while starting transmission at any time instant $s_i$ for each user. We classify the $\cal{DR-MLSP}$ problem based on the distribution of the minimum length scheduling slots of users over time into non-overlapping and overlapping scenarios. For the non-overlapping scenario, we propose optimal scheduling algorithm. For the overlapping scenario, we propose a polynomial-time heuristic scheduling algorithm.
\item We evaluate the performance of the proposed scheduling schemes for different network scenarios and parameters including different minimum signal to noise ratio levels of the users, different HAP transmit powers and different network sizes in comparison to the optimality and pre-determined order transmission schemes. We illustrate that the proposed polynomial time heuristic algorithm perform very close to optimal while outperforming the predetermined schedule.
\end{itemize}

The rest of the paper is organized as follows. In Section \ref{sec:system}, we describe the discrete rate WPCN system model and assumptions used throughout the paper. In Section \ref{sec:dr-mlsp}, we present the mathematical formulation of the discrete rate minimum length scheduling problem and the proof of NP-Hardness for the proposed problem. In Section \ref{sec:power}, we present the optimal power control and rate adaptation problem for a pre-determined transmission order and derive its optimal solution. In Section \ref{sec:scheduling}, we investigate the optimal scheduling problem. In Section \ref{sec:simulation}, we evaluate the performance of the proposed scheduling schemes. Finally, concluding remarks are presented in Section \ref{sec:conclusion}.

\section{System Model and Assumptions} \label{sec:system}

The system model and assumptions are described as follows:

 \begin{figure}[t]
 \centering
\includegraphics[width= 0.5 \linewidth]{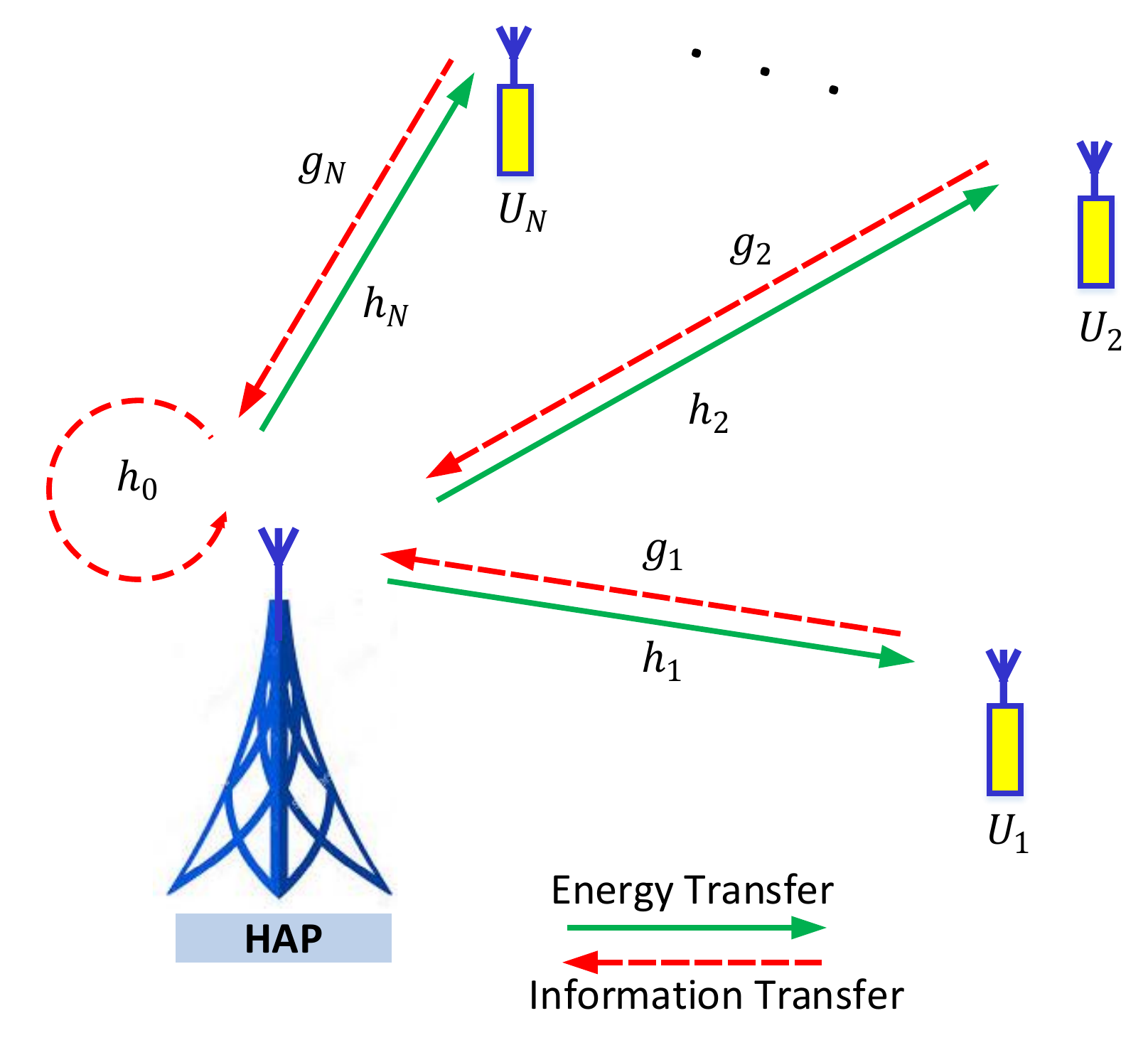}
\caption{System Model for Wireless Powered Communication Network} \label{Figure:1}
\end{figure}
\begin{enumerate} 
\item The WPCN architecture consists of a HAP and N users i.e., sensors or machine type communication devices as shown in Fig. \ref{Figure:1}. The HAP and all the N users are equipped with a full-duplex antenna which will be used to transfer the data on the uplink channel, and to transfer energy on the downlink channel simultaneously.   
\item In the proposed architecture, the uplink and downlink channel gains are assumed to be different. The uplink channel gain from user $i$ to the HAP is denoted by $g_i$ and the downlink channel gain from the HAP to user $i$ is denoted by $h_i$. Additionally, both the channels are assumed to be block-fadding channels i.e., the channel gains remain unchanged over the scheduling frame \cite{harvest_19, harvest_04, harvest_39_ref17, harvest_39, harvest_51, harvest_56}. We also assume that the HAP has perfect channel state information i.e., the channel gains are perfectly known at the HAP \cite{harvest_07,harvest_04,harvest_50, harvest_44, harvest_55,harvest_51,harvest_30,harvest_40}. 
\item The HAP is connected to a fixed power supply, and it will transmit a constant power $P_h$. In contrast to the HAP, the users have no other external power supply i.e., they are totally dependent on the harvested energy. All the users will harvest the energy transmitted by the HAP and they will store it in a rechargeable battery. The initial battery level of user $i$ is denoted by $B_i$ which contains the harvested and unused energy during the previous transmission cycles.

\item Time division multiple access protocol is used as the medium access protocol for the uplink data transmission from the users to the HAP. The whole time, during which the system will remain operational is partitioned into variable length scheduling frames. Each frame is further divided into variable time slots for user allocation. The HAP transmits continuous power throughout the scheduling frame. Each user will transmit its uplink data in the allocated time slot. In addition to stored energy, the energy harvested until the end of transmission will also be used for the data transmission. The energy harvested after the transmission will be stored for future usage.   
\item We assume a realistic non-linear energy harvesting model based on logistic function \cite{NLEH_model_01,NLEH_model_02,NLEH_01} which performs close to the experimental results proposed in \cite{NLEH_Prac_01,NLEH_Prac_02,NLEH_Prac_03}. In this non-linear energy harvesting model the energy harvesting rate for user $i$ is given as: 
\begin{equation}
C_i=\dfrac{P_s[\Psi_i-\Omega_i]}{1-\Omega_i}
\end{equation}
Where, $\Omega_i=\dfrac{1}{1+e^{A_iB_i}}$ is a constant to make sure zero-input zero-output response, $P_s$ is the maximum harvested power during saturation and $\Psi_i$ is the logistic function related to user $i$ and is given by:
\begin{equation}
\Psi_i=\dfrac{1}{1+e^{-A_i(h_iP_h-B)}}
\end{equation}
Where, $A$ and $B$ are the positive constants related to the non-linear charging rate with respect to the input power and turn-on threshold respectively. For a given energy harvesting circuit, the parameters $P_s$, $A$ and $B$ can be determined by curve fitting. 
\item We assume user $i$ has a traffic demand $D_i$ bits to be transmitted over the scheduling frame. 
\item We use discrete rate transmission model, in which a finite set of rates $r=(r^1, r^2,\cdots, r^M)$ and a finite set of SINR levels $\gamma=(\gamma^1, \gamma^2,\cdots, \gamma^M)$ are determined such that user $i$ can transmit at rate $r^k$ in the allocated time slot if the SINR achieved for user $i$ is:
 \begin{equation} \label{transmission_rate}
 \gamma_i = \dfrac{P_ig_i}{N_oW+\beta P_h}\geq \gamma^k,
 \end{equation}
where, $N_o$ is the noise density and $W$ is the bandwidth. The term $\beta P_h$ is the power of self interference at the HAP.
\item We use continuous power model in which the transmission power of a user can take any value below a maximum level $P_{max}$, which is imposed to avoid the interference to nearby systems.
\end{enumerate}
\section{Optimization Framework} \label{sec:dr-mlsp}
In this section, we characterize the discrete rate minimum length scheduling problem referred as $\cal{DR-MLSP}$. We first present the mathematical formulation of $\cal{DR-MLSP}$ as an mixed integer non-linear optimization problem. Then, we prove the NP-hardness of $\cal{DR-MLSP}$. Finally, we propose the solution strategy followed in the subsequent sections.

\subsection{Mathematical Formulation}
The joint optimization of the time allocation, power control, rate adaptation and scheduling with the objective of minimizing the schedule length given the traffic demands of the users while considering realistic non linear energy harvesting model for an in-band full-duplex system is formulated as follows:

$\cal{DR-MLSP}$:
\begin{subequations} \label{opt_problem}
\begin{align}
& \textit{minimize}
& & \sum_{i=0}^{N}\tau_i \label{obj}\\
& \textit{subject to}
& & P_ig_i-\bigg(\sum_{k=1}^{M}z_{ik}\gamma^k \bigg)\bigg(\sigma_o^2+\beta P_h\bigg)\geq0,  \label{datarate}\\
&&& B_i+C_i\tau_0+C_i\sum_{j=1,i\neq j}^{N}a_{ij}\tau_j+C_i\tau_i-P_i\tau_i\geq 0 \label{energy_causality}\\
&&& \tau_i\chi_i\geq D_i,  \label{trafficDemand}\\
&&& P_i\leq P_{max}. \label{pmax} \\
&&& a_{ij}+a_{ji}=1,i\neq j \label{ordering}\\
&&& \chi_i=\sum_{k=1}^{M}z_{ik}r^k \label{rate}\\
&&& \sum_{k=1}^{M}z_{ik}=1 \label{OneRate}\\
& \textit{variables}
& & P_i \geq 0, \hspace*{0.1cm} \tau_i\geq 0, \hspace*{0.1cm} a_{ij} \in \{0,1\}, \hspace*{0.1cm} z_{ik} \in \{0,1\}.\label{pcp1_vars}
\end{align}
\end{subequations}

The variables of the problem are $P_i$, the transmit power of user $i$; $\tau_i$, the transmission time of user $i$, $a_{ij}$, binary variable that takes value $1$ if user $i$ is scheduled before user $j$ and $0$ otherwise and $z_{ik}$, a binary variable which takes a value $1$ if user $i$ is allocated rate $r_k$ and $0$ otherwise. In addition, $\tau_0$ denotes an initial waiting time duration during which all the users only harvest energy without transmitting any information.

The objective of the optimization problem is to minimize the schedule length as given by Equation (\ref{obj}). Equation (\ref{datarate}) represents the constraint on satisfying the rate adaptation of the users. Equation (\ref{energy_causality}) gives the energy causality constraint: The total amount of available energy, including both the initial energy and the energy harvested until and during the transmission of a user, should be greater than or equal to the energy consumed during its transmission. Equation (\ref{trafficDemand}) represents the traffic demand constraint of the users. Equation (\ref{ordering}) represents the scheduling constraint, stating that if user $i$ transmits before user $j$, user $j$ cannot transmit before user $i$. Equation (\ref{pmax}) represents the maximum transmit power constraint. Whereas, Equation (\ref{rate}) and Equation (\ref{OneRate}) represents the adapted rate and its uniqueness respectively.

This optimization problem is a Mixed Integer Non-Linear Programming problem thus difficult to solve for the global optimum \cite{opt_book}. 

\subsection{NP-hardness of the Problem} \label{sec:np}

\begin{theorem}
Discrete rate minimum length scheduling problem, $\cal{DR-MLSP}$, is NP-hard.
\end{theorem}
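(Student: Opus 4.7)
The plan is to establish NP-hardness by polynomial-time reduction from a known NP-hard problem to $\cal{DR-MLSP}$. The combinatorial difficulty is concentrated in two coupled discrete choices: the per-user rate assignment $z_{ik}$, which amounts to a multiple-choice selection from the rate set $\{r^1,\dots,r^M\}$, and the transmission order encoded by $a_{ij}$, which determines how long each user has harvested energy before its own slot via the causality constraint (\ref{energy_causality}). Either decision alone already suggests hardness; together they embed a Multiple-Choice Knapsack-type (MCKP) gadget inside a sequencing problem.

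First, I would isolate the discrete combinatorial hardness by restricting to a family of $\cal{DR-MLSP}$ instances in which the scheduling order is effectively forced (for instance, by setting one user's traffic demand and channel gains so that any minimum-length solution must place it in a unique position). On this family the only nontrivial discrete variables are the rate indicators $z_{ik}$; constraint (\ref{trafficDemand}) then forces $\tau_i = D_i/r^k$ whenever $z_{ik}=1$, so the objective (\ref{obj}) becomes a sum whose value depends solely on the rate choices.

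Second, I would reduce from MCKP: given items partitioned into $N$ classes with $M$ items each, I would map every class to a user and every item to a rate level. I would choose $D_i$, $g_i$, $h_i$, $B_i$, $P_{max}$ and the SINR thresholds $\gamma^k$ so that (i) constraint (\ref{energy_causality}) collapses, per user, to the MCKP knapsack inequality linking the chosen rate's energy weight to the user's available energy, and (ii) the schedule length (\ref{obj}) is an increasing affine function of the MCKP total profit. Since the nonlinear harvesting rate $C_i$ depends only on fixed channel parameters once $P_h$ is fixed, it can be precomputed and treated as a constant in the encoding. After verifying that the construction is polynomial in the MCKP input size and that an optimal $\cal{DR-MLSP}$ solution yields, via $z_{ik}$, an optimal MCKP solution and vice versa, NP-hardness follows.

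The main obstacle will be the second step: the nonlinear harvesting model, the self-interference term $\beta P_h$ in (\ref{datarate}), the maximum-power cap $P_{max}$, and the coupling through the harvesting windows $\tau_0$ and $\sum_{j\neq i} a_{ij}\tau_j$ must all be set consistently so that they cooperate in the gadget rather than introducing spurious slack or infeasibility. A cleaner alternative, if the MCKP encoding becomes cumbersome, is to reduce instead from the classical Partition or $0$-$1$ Knapsack problem with only $M=2$ rate levels, effectively offering each user a choice between a \emph{low-energy/low-rate} and a \emph{high-energy/high-rate} option; this keeps the construction very simple while still being enough to conclude that $\cal{DR-MLSP}$ is NP-hard.
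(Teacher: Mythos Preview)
Your proposal has a genuine gap. You plan to force the transmission order so that the only remaining discrete decisions are the per-user rate indicators $z_{ik}$, and then encode a knapsack-type problem in those choices. But the paper shows precisely that this restricted problem is \emph{easy}: for any fixed order, $\cal{DR-PCP}$ is solved optimally in polynomial time by Theorem~\ref{thm:OPCA} and Algorithm~\ref{algo_PDO}. Lemma~\ref{lemma:MLS} guarantees that the optimal rate (equivalently, the optimal starting time) for each user can be found by checking at most $M+1$ candidate instants, and Lemma~\ref{lemma:MLSproperty} supplies the monotonicity that makes the greedy sequential allocation globally optimal. Hence the family of instances you isolate is polynomial-time solvable, and no valid reduction from MCKP, Partition, or $0$--$1$ Knapsack can land inside it. The combinatorial hardness of $\cal{DR-MLSP}$ lives in the ordering variables $a_{ij}$, not in the rate selection once the order is fixed; your plan discards exactly the hard part.

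The paper's proof takes the opposite tack: it keeps the ordering decision as the source of difficulty and reduces from the single-machine makespan minimization problem with step-improving processing times (known NP-hard from \cite{cheng2006scheduling}). With $M=2$ rate levels chosen so that every user switches from rate $r^1$ to $r^2$ at a common critical date $T$, each user behaves like a job with processing time $D_i/r^1$ if started before $T$ and $D_i/r^2$ if started at or after $T$; minimizing the schedule length is then exactly makespan minimization with a common step-improving date. If you want to repair your argument, you should look for a source problem whose hardness is in \emph{sequencing}, not in per-item selection.
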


\begin{proof}
We prove by reduction from single machine makespan minimization problem with step improving times studied in \cite{cheng2006scheduling}. We first describe the make-span minimization problem: There are $n$ independent jobs $\lbrace J_1, J_2,...,J_n\rbrace$ available at time $t=0$ with a common critical date $D$. Each job $J_i$ has two processing times dependent on the starting time of the job. If job $J_i$ is started at $t<D$, then its processing time is $\alpha_i$; if it is started at $t\geq D$ otherwise, its processing time is $\beta_i$ where $0\leq \beta_i \leq \alpha_i$. The goal is to determine a non-preemptive schedule minimizing the make-span which is the completion time of the last scheduled job. 

We construct the following instance of $\cal{DR-MLSP}$: There are $n$ users, $M=2$ discrete data rates $r=(r^1, r^2)$ and SINR levels $\gamma=(\gamma^1, \gamma^2)$ where all users can afford to transmit with rate $r^1$ until a common time $T$ and with $r^2$ for $t\geq T$. Then it can be easily verified that the single machine make-span minimization problem with step improving times is polynomial-time solvable if and only if this particular instance of $\cal{DR-MLSP}$ is polynomial-time solvable. As a consequence, since the former problem is NP-hard, $\cal{DR-MLSP}$ is also NP-hard.
\end{proof}

\subsection{Solution Framework}

As the NP-hardness of the problem indicates, it is generally \textit{difficult} to solve $\cal{DR-MLSP}$ for a global optimum. In other words, finding a global optimum requires algorithms with exponential complexity. Such optimal algorithms, on the other hand, are intractable even for moderate problem sizes. In this paper, we present a solution framework to overcome this intractability based on the decomposition of the optimal power control and rate adaptation problem and the optimal scheduling problem as described below:

\begin{itemize}
\item For a given scheduling order of the users; i.e., for predetermined set of $a_{ij}$ values, $\cal{DR-MLSP}$ requires determining the optimal power control and rate adaptation of the users with minimum schedule length while considering their data, maximum transmit power and energy causality constraints. We first formulate this problem and introduce the minimum length scheduling (MLS) slot definition which gives the best transmission slot for a user to complete the transmission earliest. Then, we present the optimal power control and rate adaptation policy.
\item Determining the optimal power control and rate adaptation for a given scheduling order reduces $\cal{DR-MLSP}$ to determining the optimal scheduling order. Based on the introduced MLS slot definition, we first classify the scheduling problem to non-overlapping (i.e. in which the best transmission slots are distributed in a way that they do not overlap each other over time) and overlapping MLS slots scenarios (i.e. in which the MLS slots of different users are coming into collision with each other over time). For non-overlapping scenario, we propose an optimal scheduling algorithm. For overlapping scenario, we propose a heuristic scheduling algorithm based on the investigation of the effect of the overlapping slots on the optimal schedule.
\end{itemize}

\section{Power Control and Rate Adaptation Problem}\label{sec:power}
In this section, we will investigate the discrete rate optimal power control and rate adaptation problem referred as $\cal{DR-PCP}$ for a given schedule i.e. $a_{ij}$ are given in the $\cal{DR-MLSP}$ with the objective to allocate optimal power, transmission time, and rate adaptation. Without loss of generality, we assume that the user $i$ is transmitting its data in the $i^{th}$ transmission slot in the schedule. The reformulated optimization problem is given below:\\
$\cal{DR-PCP}$:
\begin{subequations} \label{opt_problem}
\begin{align}
& \textit{minimize}
& & \sum_{i=0}^{N}\tau_i \label{obj1}\\
& \textit{subject to}
& & P_ig_i-\bigg(\sum_{k=1}^{M}z_{ik}\gamma^k \bigg)\bigg(\sigma_o^2+\beta P_h\bigg)\geq0,  \label{datarate1}\\
&&& B_i+C_i\tau_0+C_i\sum_{j=1}^{i}\tau_j-P_i\tau_i\geq 0 \label{energy_causality1}\\
&&& \tau_i\chi_i\geq D_i,  \label{trafficDemand1}\\
&&& P_i\leq P_{max}. \label{pmax1} \\
&&& \chi_i=\sum_{k=1}^{M}z_{ik}r^k \label{rate1}\\
&&& \sum_{k=1}^{M}z_{ik}=1 \label{OneRate1}\\
& \textit{variables}
& & P_i \geq 0, \hspace*{0.1cm} \tau_i\geq 0, \hspace*{0.1cm} z_{ik} \in \{0,1\}.\label{pcp1_vars}
\end{align}
\end{subequations}

In order to solve $\cal{DR-PCP}$ optimally, we will first define the minimum length scheduling (MLS) slot and then present the optimal algorithm.

We start by investigating the optimal power control policy for a single user. Initially, a user may not be able to transmit with even the minimum possible transmission rate level $r^1$ since the initial available energy $B_i$ may not be able to support the corresponding transmit power that will satisfy the SNIR constraint given by Equation (\ref{transmission_rate}). Each transmission rate level $r^k$ requires certain amount of energy available to complete the required data transmission for the user. Let $t_i^k$ be the first time instant at which user $i$ can afford to use transmission rate $r^k$ via satisfying the SNIR constraint $\gamma_i\geq \gamma^k$ using the harvested energy. Note that $t_i^1\leq t_i^2\leq ...\leq t_i^M$. 

 \begin{figure}[t]
 \centering
\includegraphics[width= 0.5 \linewidth]{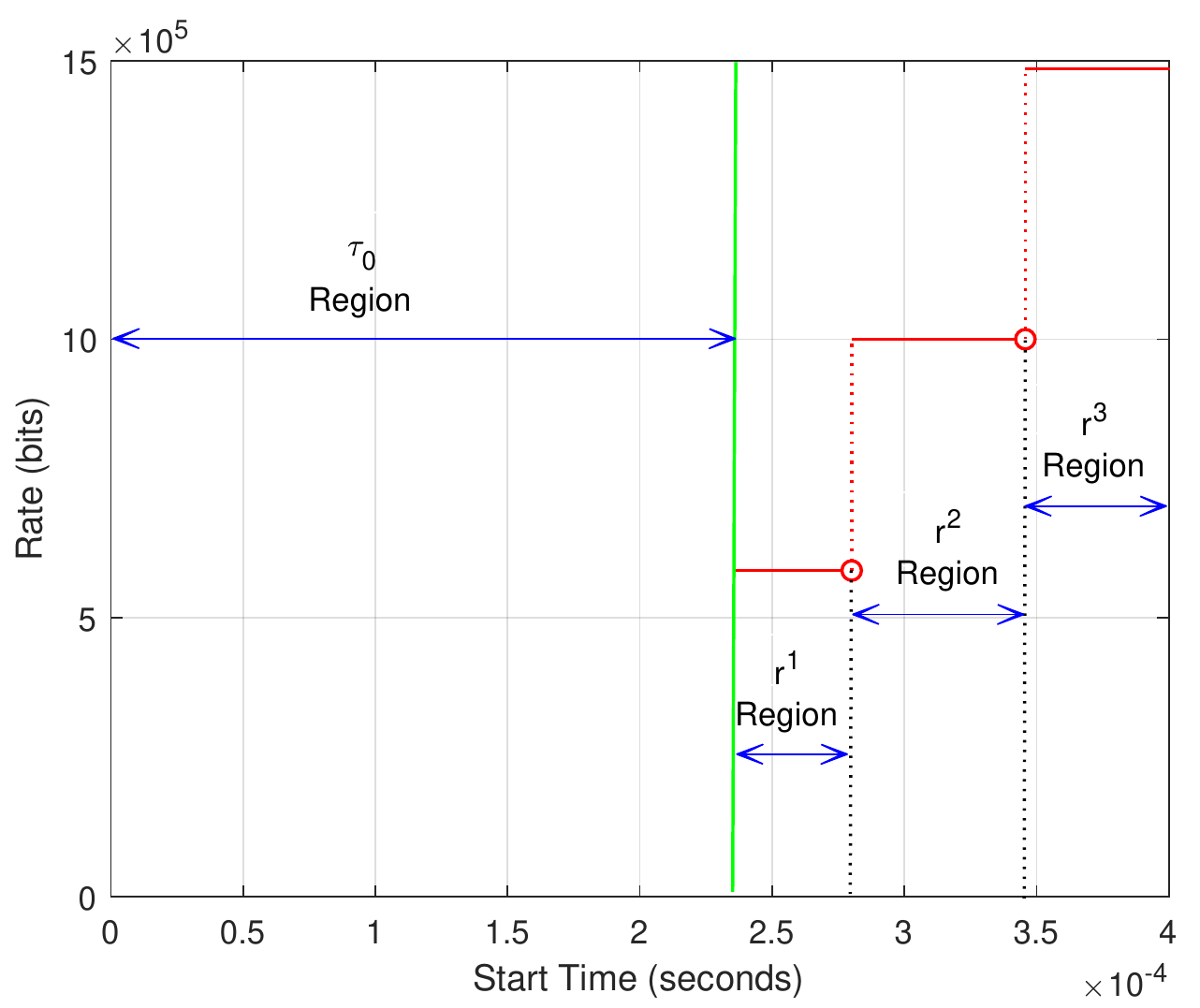}
\caption{Illustration of different rate regions for a user.} \label{fig:regions}
\end{figure}

 \begin{figure}[t]
 \centering
\includegraphics[width= 0.5 \linewidth]{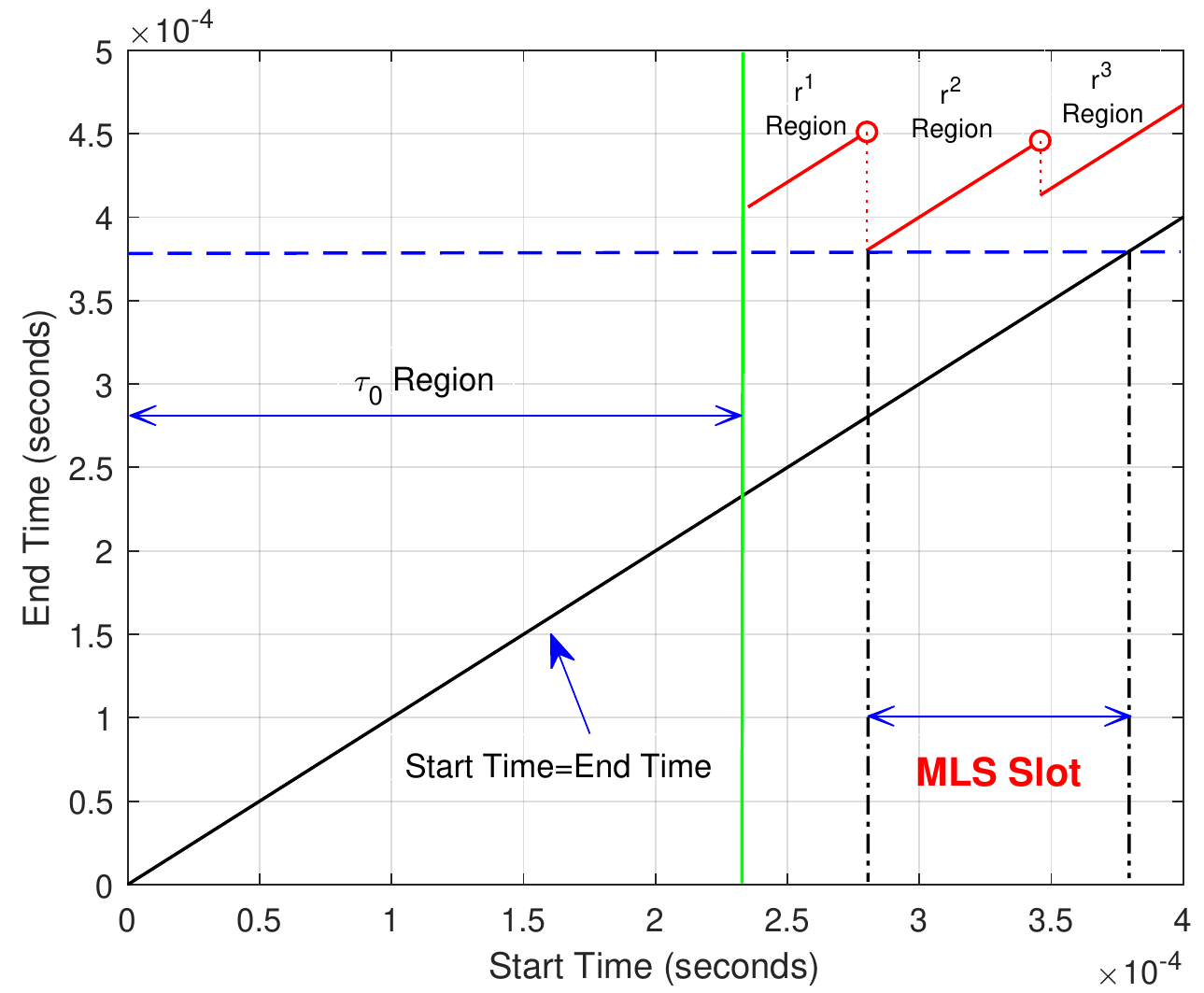}
\caption{Illustration of minimum length scheduling (MLS) slot for a user.} \label{fig:MLS_definition}
\end{figure}

Consider Fig.  \ref{fig:regions} illustrating the regions for transmission rates $r^k$s for a user for $M=3$ rate levels. $\tau_0$ region denotes the initial waiting time for user $i$ to be able to transmit with $r^1$ by satisfying $\gamma_i\geq \gamma^1$. Then, for a finite duration, user $i$ can only transmit with $r^1$; i.e., $r^{1}$ region. At $t_i^2$, user $i$ can support $r^2$ for the first time and $r^2$ is the maximum rate it can support for a specific duration; i.e., $r^2$ region. Then, finally at $t_i^3$, user $i$ can support $r^3$ for the first time and $r^3$ is the maximum rate it can support after $t\geq t_i^3$. Now, consider Fig. \ref{fig:MLS_definition} illustrating the transmission completion (end time) vs. allocation time (start time) for a user $i$. Let $s_i$ be the start time of the transmission of user $i$. Then, the completion time of the transmission is given by $e_i = s_i + \tau_i$ where $\tau_i$ is the transmission time of user $i$ such that $\tau_i=D_i/ \chi_i$. Since, user $i$ can support a higher transmission rate level at $t_i^k$ instants, $e_i$ is not a monotonically increasing function of $s_i$. At any $s_i=t_i^k$ value, $e_i$ decreases discontinuously by an amount of $D_i/r^{k-1}-D_i/r^{k}$ since the transmission rate jumps from $r^{k-1}$ to $r^{k}$. This suggests that waiting for the next transmission rate may decrease the completion time for a single user depending on the time instant at which the scheduling decision is given. 

Next, we introduce the minimum length scheduling (MLS) slot definition in the following.

\begin{definition} \label{def:MLS}
Let $t_i^{dec}$ is the time instant at which a scheduling decision is made for a user $i$. Then, let $s_i^*\geq t_i^{dec}$ be the starting time for user $i$ yielding the minimum completion time $e_i^*$ such that $e_i^*=\min_{s_i\geq t_i^{dec}} s_i+\tau_i(s_i)$. Minimum length scheduling (MLS) slot for user $i$ at time $t_i^{dec}$ is then defined as the time slot for user $i$ allocated in the interval $[s_i^*, e_i^*]$ where $s_i^*\geq t_i^{dec}$.
\end{definition}  

\begin{lemma} \label{lemma:MLSproperty}
For the MLS slot at time $t$ given by the interval $[s_i^*, e_i^*]$ , $e_i^*$ and $s_i^*$ are non-decreasing functions of $t$.
\end{lemma}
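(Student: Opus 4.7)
The plan is to combine the piecewise structure of the completion-time curve illustrated in Fig.~\ref{fig:MLS_definition} with a simple shrinking-feasible-set argument. Writing $e_i(s_i)=s_i+D_i/\chi_i(s_i)$, the rate function $\chi_i(\cdot)$ is a non-decreasing step function equal to $r^k$ on $[t_i^k,t_i^{k+1})$ (with $t_i^{M+1}:=\infty$), so $e_i(\cdot)$ is piecewise linear with unit slope on each such interval and has downward jumps of size $D_i/r^k-D_i/r^{k+1}$ at the breakpoints $t_i^{k+1}$. In particular, the minimum of $e_i$ over any half-line $[t,\infty)$ is always attained inside the finite candidate set $\{t\}\cup\{t_i^k:t_i^k>t\}$, and I would fix the convention that $s_i^*(t)$ denotes the \emph{smallest} such minimizer, in order to resolve the ties created by the jumps.

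With this setup, the monotonicity of $e_i^*$ is immediate: for $t_1\le t_2$, since $\{s_i\ge t_2\}\subseteq\{s_i\ge t_1\}$,
\[
 e_i^*(t_2)=\min_{s_i\ge t_2}e_i(s_i)\;\ge\;\min_{s_i\ge t_1}e_i(s_i)=e_i^*(t_1).
\]
For $s_i^*$ I would split into two cases. If $s_i^*(t_1)<t_2$, then every $s_i$ feasible at $t_2$ satisfies $s_i\ge t_2>s_i^*(t_1)$, so trivially $s_i^*(t_2)>s_i^*(t_1)$. If instead $s_i^*(t_1)\ge t_2$, then $s_i^*(t_1)$ remains feasible at $t_2$; combined with $e_i^*(t_2)\ge e_i^*(t_1)=e_i(s_i^*(t_1))$, this forces $e_i^*(t_2)=e_i^*(t_1)$, so $s_i^*(t_1)$ is also an optimizer at $t_2$. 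Any strictly smaller optimizer at $t_2$ would lie in $[t_2,\,s_i^*(t_1))$ and would thus be feasible and optimal at $t_1$ as well, contradicting the smallest-minimizer convention used to define $s_i^*(t_1)$. Hence $s_i^*(t_2)\ge s_i^*(t_1)$ in both cases.

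The delicate point I expect to argue most carefully is exactly this tie-breaking. Because $e_i$ has downward jumps, two distinct candidates can share the same completion time, and without a fixed convention $s_i^*(t)$ is ambiguous at such instants, so the lemma is only true ``up to ties''. Once the smallest-minimizer convention is adopted, the structural argument above goes through cleanly, and the completion time $e_i^*$---which is what the subsequent scheduling procedures in Section~\ref{sec:scheduling} actually depend on---is unaffected by the particular choice.
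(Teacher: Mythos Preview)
Your proof is correct and follows the same core idea as the paper's: monotonicity of $e_i^*$ comes from the shrinking feasible set $\{s_i\ge t\}$, and monotonicity of $s_i^*$ is then deduced from that of $e_i^*$. The paper's version is much terser and in fact glosses over exactly the tie-breaking issue you flag---its one-line claim that ``$s_i^1>s_i^2$ only if $e_i^1<e_i^2$, which is a contradiction'' does not hold as stated (one only obtains $e_i^1=e_i^2$, not a strict inequality), so your explicit smallest-minimizer convention and case split make rigorous a step the paper leaves incomplete.
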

\begin{proof}
By Definition \ref{def:MLS}, MLS slot for a user $i$ at $t$ is defined as the interval $[s_i^*, e_i^*]$ where $e_i^*=\min_{s_i\geq t_i^{dec}} s_i+\tau_i(s_i)$. For two time instants $t_1$ and $t_2$ such that $e_i^1=t_1\leq t_2$,  $\min_{s_i\geq t_1} s_i+\tau_i(s_i) \leq \min_{s_i\geq t_2} s_i+\tau_i(s_i)=e_i^2$. Moreover, $s_i^1>s_i^2$ only if $e_i^1<e_i^2$ which is a contradiction. \end{proof}

Then, it is evident that for a single user, the DR-MLSP problem is solved by allocation of MLS slot at $t_i^{dec}=0$. Fig. \ref{fig:MLS_definition} illustrates the MLS slot for a single user $i$. Note that the MLS slot for user $i$ starts at $t_i^2$ where user $i$ can afford rate $r^2$ for the first time. This suggests that even if user $i$ can transmit with $r^1$ previously, the optimal policy is to wait until the time instant $t_i^2$ since the decrease in the transmission time due to this rate increase is larger than the waiting duration. Note that MLS slot for the user starts at a rate change instant $t_i^k$ which is not a coincidence for this specific scenario. The following lemma illustrates this behaviour.

\begin{lemma} \label{lemma:MLS}
MLS slot for user $i$ at $t_i^{dec}$ starts at either $t_i^{dec}$ or  $t_i^k\geq t_i^{dec}$ for some $k\in[1,M]$.
\end{lemma}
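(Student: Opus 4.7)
The plan is to exploit the piecewise-constant structure of the rate function $\chi_i(s_i)$ as a function of the start time $s_i$. By the construction of the rate regions (Fig.~\ref{fig:regions}), the transmission rate a user can support at time $s_i$ is
\[
\chi_i(s_i) = r^k \quad \text{whenever} \quad t_i^k \le s_i < t_i^{k+1},
\]
with the convention $t_i^{M+1}=\infty$. Consequently $\tau_i(s_i)=D_i/\chi_i(s_i)$ is piecewise constant, so the completion-time function
\[
e_i(s_i) \;=\; s_i + \tau_i(s_i)
\]
is piecewise \emph{affine with slope $1$} on each region $[t_i^k,\,t_i^{k+1})$ and suffers a downward jump of size $D_i/r^{k-1}-D_i/r^k>0$ at every rate-change instant $t_i^k$.

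First, I would argue that on any single region $[t_i^k,t_i^{k+1})$ the infimum of $e_i$ is attained at the \emph{left endpoint} of the intersection of that region with $[t_i^{dec},\infty)$, because $e_i$ is strictly increasing there. Thus the only candidate minimizers of $e_i$ over $[t_i^{dec},\infty)$ are the left endpoints of the regions that intersect $[t_i^{dec},\infty)$. These left endpoints are, by inspection, exactly of two types: either they are rate-change instants $t_i^k$ with $t_i^k\ge t_i^{dec}$, or they are the boundary $t_i^{dec}$ itself (which occurs for the unique region that contains $t_i^{dec}$ strictly in its interior).

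Next, I would observe that since there are only finitely many regions (at most $M$) and $e_i(s_i)\to\infty$ as $s_i\to\infty$, the minimum in Definition~\ref{def:MLS} exists, and by the previous paragraph it must be achieved at one of the candidates above. Therefore the optimal start time $s_i^*$ equals either $t_i^{dec}$ or some $t_i^k\ge t_i^{dec}$, which is the claim.

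The only subtlety I anticipate is handling the corner case in which $t_i^{dec}$ coincides with some rate-change instant $t_i^k$; in that situation both characterizations reduce to the same point and the statement remains correct. I would also note, to tie the argument together, that Lemma~\ref{lemma:MLSproperty} already guarantees that comparing $e_i$ values across the finitely many candidates is well-posed. The routine part of the proof is just verifying the monotonicity on a single region; the rest follows from comparing finitely many candidate values.
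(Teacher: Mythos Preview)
Your proposal is correct and uses essentially the same idea as the paper: the completion time $e_i(s_i)=s_i+\tau_i(s_i)$ is strictly increasing on each rate region because $\tau_i$ is constant there, so the optimal start must lie at a left endpoint of a region intersected with $[t_i^{dec},\infty)$. The paper packages this as a short contradiction argument (assume $s_i^*$ lies strictly inside a region and shift it left to $t_i^l$ or $t_i^{dec}$), whereas you give the direct version; the content is the same.
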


\begin{proof}
Let starting time for the MLS slot of user $i$ be $s_i^*$ such that $s_i^*\neq t_i^{dec}$ or $s_i^*\neq t_i^k$ for $k\in[1,M]$. Then, let the completion time be $e_i^*$. Suppose that $s_i^*$ is inside $r^l$ region; i.e., $t_i^{l} <s_i^* < t_i^{l+1}$. If $t_i^{dec}\leq t_i^{l}$, then starting the transmission time of user $i$ at $t_i^{l}$ decreases the completion time by $s_i^*-t_i^{l}$ since the transmission rate remains the same within $r^l$ region. If $t_i^{dec}> t_i^{l}$, then starting the transmission time of user $i$ at $t_i^{dec}$ decreases the completion time by $s_i^*-t_i^{dec}$. This is a contradiction by definition of MLS slot.
\end{proof}

%\begin{lemma}
%The optimality of a schedule when all users are given MLS slots at $t=0$.
%\end{lemma}
%\begin{proof}
%Proof goes here???
%\end{proof}

Lemma \ref{lemma:MLS} illustrates that for a single user, the optimal scheduling policy is allocation of the user at either the time instant where the scheduling decision is made or at one of the time instants where the transmission rate changes for the user. Then, MLS slot determination can be made by evaluating at most $M+1$ time instants where $M$ is the number of rate levels. 

\begin{theorem} \label{thm:OPCA}
For a predetermined transmission order of users $\lbrace1, 2, ..., N\rbrace$, the optimal schedule is obtained by the allocation of MLS slot for each user $i$ at $t_i=e_{i-1}$ with $e_{0}=0$.
\end{theorem}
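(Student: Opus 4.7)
The plan is to proceed by induction on the user index $i$, showing that the greedy MLS allocation $t_i = e_{i-1}$ minimizes the completion time $e_i^*$ of user $i$, and hence minimizes the overall schedule length $e_N^*$, which equals the objective of $\cal{DR-PCP}$ under the TDMA ordering.

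For the base case $i=1$, there is no preceding user, so $t_1 = e_0 = 0$. By Definition \ref{def:MLS}, allocating the MLS slot at $t_1 = 0$ yields exactly the minimum achievable completion time $e_1^* = \min_{s_1 \geq 0} \{s_1 + \tau_1(s_1)\}$ over all feasible power, rate, and starting time choices for user 1 (subject to the maximum transmit power, energy causality, and traffic demand constraints).

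For the inductive step, assume that users $1, \ldots, i-1$ have been scheduled via the greedy MLS rule, attaining the minimum possible $e_{i-1}^*$. The scheduling constraint enforces $s_i \geq e_{i-1}^*$, so setting $t_i^{dec} = e_{i-1}^*$ and allocating the MLS slot then yields, by Definition \ref{def:MLS}, the smallest achievable $e_i$ consistent with $s_i \geq e_{i-1}^*$. The crucial step is to rule out alternative schedules in which user $i-1$ completes later at some $e_{i-1}' > e_{i-1}^*$: by Lemma \ref{lemma:MLSproperty}, the best MLS end time $e_i'$ obtainable from $t_i^{dec} = e_{i-1}'$ satisfies $e_i' \geq e_i^*$. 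Thus no alternative can strictly improve $e_i$, and by induction, every $e_i$ (and in particular $e_N$) is minimized by the greedy MLS policy. Note that the energy harvesting term $C_i \tau_0 + C_i \sum_{j=1}^{i-1} \tau_j$ for user $i$ depends only on the elapsed time before $s_i$, which is captured entirely inside the MLS slot computation at $t_i^{dec} = e_{i-1}^*$, so no cross-user energy trade-off is lost by the decomposition.

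The main obstacle is formalizing why greedy minimization of an intermediate $e_{i-1}$ cannot be beaten by a schedule that intentionally delays user $i-1$ to give user $i$ access to a higher rate level and a smaller $e_i$. This potential pitfall is precisely what Lemma \ref{lemma:MLSproperty} eliminates: the monotonicity of $e_i^*$ as a function of the decision time guarantees that pushing $t_i^{dec}$ later can never decrease the downstream MLS end time. Once this monotonicity is invoked, the inductive exchange argument closes cleanly and establishes the theorem.
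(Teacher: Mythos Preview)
Your proposal is correct and uses essentially the same approach as the paper: both arguments hinge on Lemma~\ref{lemma:MLSproperty} (monotonicity of the MLS end time in the decision time) to show that greedily setting each $t_i$ as early as possible is optimal. The paper phrases it as a brief backward argument starting from user $N$, whereas you run a forward induction on $i$ and explicitly rule out the ``delay user $i{-}1$ to help user $i$'' alternative via the same monotonicity lemma; the underlying idea is identical, and your version is in fact a bit more carefully spelled out.
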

\begin{proof}
By Lemma \ref{lemma:MLSproperty}, the completion time of the MLS slot of a user is a non-decreasing function of the decision time. Then, in order to minimize the schedule length; i.e., the completion time of the last user, the decision for last user should be given at earliest which is the completion time of the previously scheduled user. Then, $t_N=e_{N-1}$. By the same logic, it can easily be verified that $t_i=e_{i-1}$ for $i\geq 2$. For user $1$, the earliest scheduling decision is $t=0$. This completes the proof.
\end{proof}

\begin{algorithm} []
\caption{Optimal Power Control Algorithm (OPCA)}  \label{algo_PDO}
\begin{algorithmic}[1] 
\STATE \textbf{input:} $\cal{F}$
\STATE \textbf{output:} $t(\cal{S})$
\STATE  $t^{dec}$ $\leftarrow$ 0, $t(\cal{S})$ $\leftarrow$ 0, $i$ $\leftarrow$ 1
\FOR {$i=1: |(\cal{F})|$}
\STATE determine MLS slots for user $i\in \cal{F}$ at $t^{dec}$,
\STATE $t^{dec}$ $\leftarrow$ $e_i^*(t^{dec})$,
\ENDFOR
\STATE $t(\cal{S})$ $\leftarrow$ $t^{dec}$,
\end{algorithmic}
\end{algorithm}

Based on Theorem \ref{thm:OPCA}, we present the Optimal Power Control Algorithm (OPCA), given in Algorithm  \ref{algo_PDO}. The algorithm is described as follows. Input of OPCA algorithm is a set of users, denoted by $\cal{F}$, with the characteristics specified in Section \ref{sec:system} (Line $1$). The algorithm starts by initializing the scheduling decision time  $t^{dec}$ (Line $3$). At each step of the algorithm, OPCA determines the MLS slots for the $i^{th}$ user at $t^{dec}$ (Line $5$), and allocates the transmission rate which gives the minimum transmission end time (Line $6$). Then, this user is allocated to its MLS slot starting at $s_k^*$ and completed at $e_k^*$. Algorithm continues by updating the scheduling decision time (Line $6$) and giving scheduling decisions for the remaining users (Lines $4-7$). OPCA terminates when all users in $\cal{F}$ are allocated the transmission slots and outputs the  schedule length $t(\cal{S})$ (Line $8$).

\section{Optimal Scheduling}\label{sec:scheduling}
The aim of this section is to determine the optimal schedule; i.e., the transmission order of the users, in order to minimize the length of the schedule. In Section \ref{sec:power}, we have determined the optimal time allocation and power control for a given transmission order of the users. On the other hand, optimizing the schedule can further decrease the schedule length. For instance, prioritizing the users that can reach their MLS slots earlier and delaying the users with further MLS slots may result in a significant decrease in the completion time of the transmissions. A straightforward solution to find the optimal schedule would be an exhaustive search algorithm that enumerates all possible transmission orders of the users and then determines the one with the minimum length using OPCA algorithm to determine the length of each. However, such an algorithm is evidently computationally intractable for even a medium size network due to $N!$ possible transmission orders for a network of $N$ users. However, computationally efficient solutions are required to provide scalability for different networking scenarios. 

We have shown in Section \ref{sec:np} that $\cal{DR-MLSP}$ is NP-hard. While this proves the non-availability of computationally efficient optimal algorithms, it does not necessarily mean that all instances of $\cal{DR-MLSP}$ require exponential-time solutions. In this section, we classify the $\cal{DR-MLSP}$ problem based on the distribution of the MLS slots of users over time into non-overlapping and overlapping scenarios. For the non-overlapping scenario, we propose optimal scheduling algorithm. For the overlapping scenario, we propose a polynomial-time heuristic scheduling algorithm.

\subsection{Non-overlapping MLS Slots Scenario}

In this section, we investigate $\cal{DR-MLSP}$ problem for the non-overlapping MLS slots (non-overlapping) scenario. First, we formally characterize the non-overlapping scenario. Let $[s_i^*, e_i^*]$ be the time interval representing the MLS slot of any user $i\in \cal{F}$ at time $t_i^{dec}=0$. Then, the condition for the non-overlapping scenario is given as follows:

\begin{equation}
s_j^* \not\in [s_i^*, e_i^*], \forall i,j \in \cal{F}
\end{equation}

 \begin{figure}[t]
 \centering
\includegraphics[width= 0.5 \linewidth]{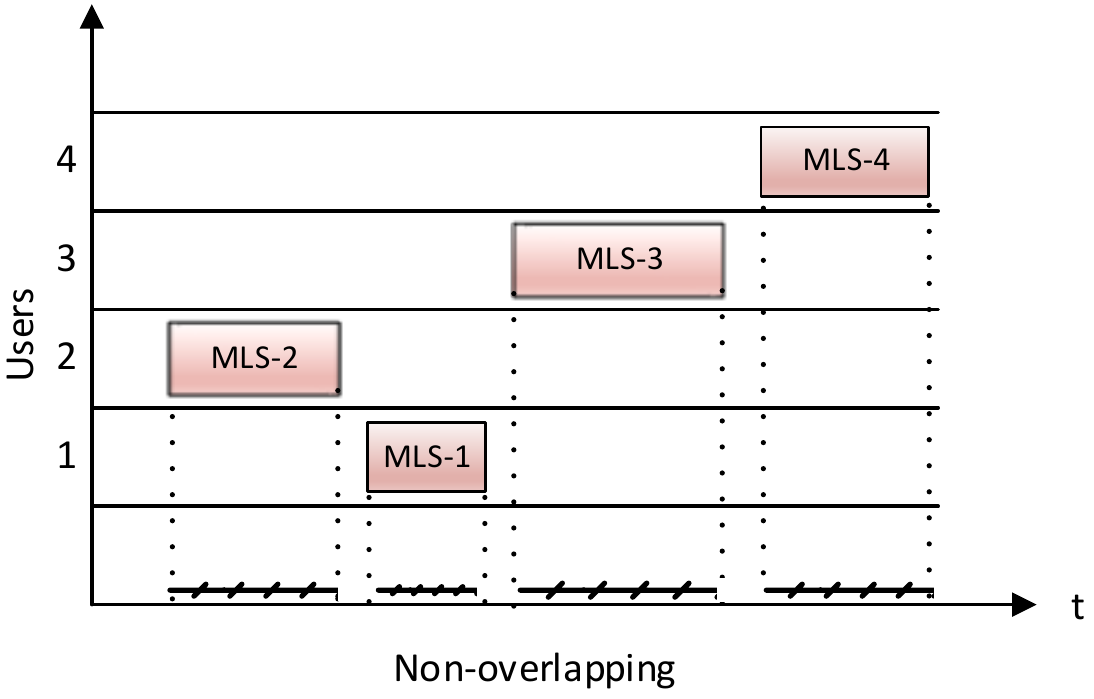}
\caption{Illustration of the non-overlapping MLS slots scenario.} \label{fig:nonoverlap}
\end{figure}

Non-overlapping MLS slots scenario is illustrated in Fig. \ref{fig:nonoverlap} for a network of $4$ users. As the figure illustrates, each user can transmit within its MLS slot without interfering with other users. Hence, the scheduling decision for any user can be given independently of the other users. Moreover, since MLS slot allocation is the optimal scheduling policy for a single user by definition, allocating their MLS slots to users in a non-overlapping scenario yields the optimal schedule. The optimal scheduling algorithm based on the allocation of the MLS slots, Optimal Scheduling for Non-overlapping Scenario (OSNS) is given in Algorithm  \ref{algo_OSNS}.

\begin{algorithm} []
\caption{Optimal Scheduling for Non-overlapping Scenario (OSNS)}  \label{algo_OSNS}
\begin{algorithmic}[1] 
\STATE \textbf{input:} $\cal{F}$
\STATE \textbf{output:} $t(\cal{S})$
\STATE  sort $s_i^*$ in increasing order s.t. $s_1^* < s_2^* < ... < s_{|\cal{F}|}^*$,
\FOR {$i=1: |(\cal{F})|$}
\STATE allocate MLS slot for user $i$ at $s_i^*$,
\ENDFOR
\STATE $t(\cal{S})$ $\leftarrow$ $e_{|\cal{F}|}^*$,
\end{algorithmic}
\end{algorithm}

As we have shown that this particular scenario covering particular instances of $\cal{DR-MLSP}$ is polynomial-time solvable, we now illustrate the probability of such instances in the following discussion.

\begin{figure}
\centering
\subcaptionbox{Probability of non-overlapping vs. number of users\label{fig:Pn}}
{\includegraphics[width=.45\linewidth]{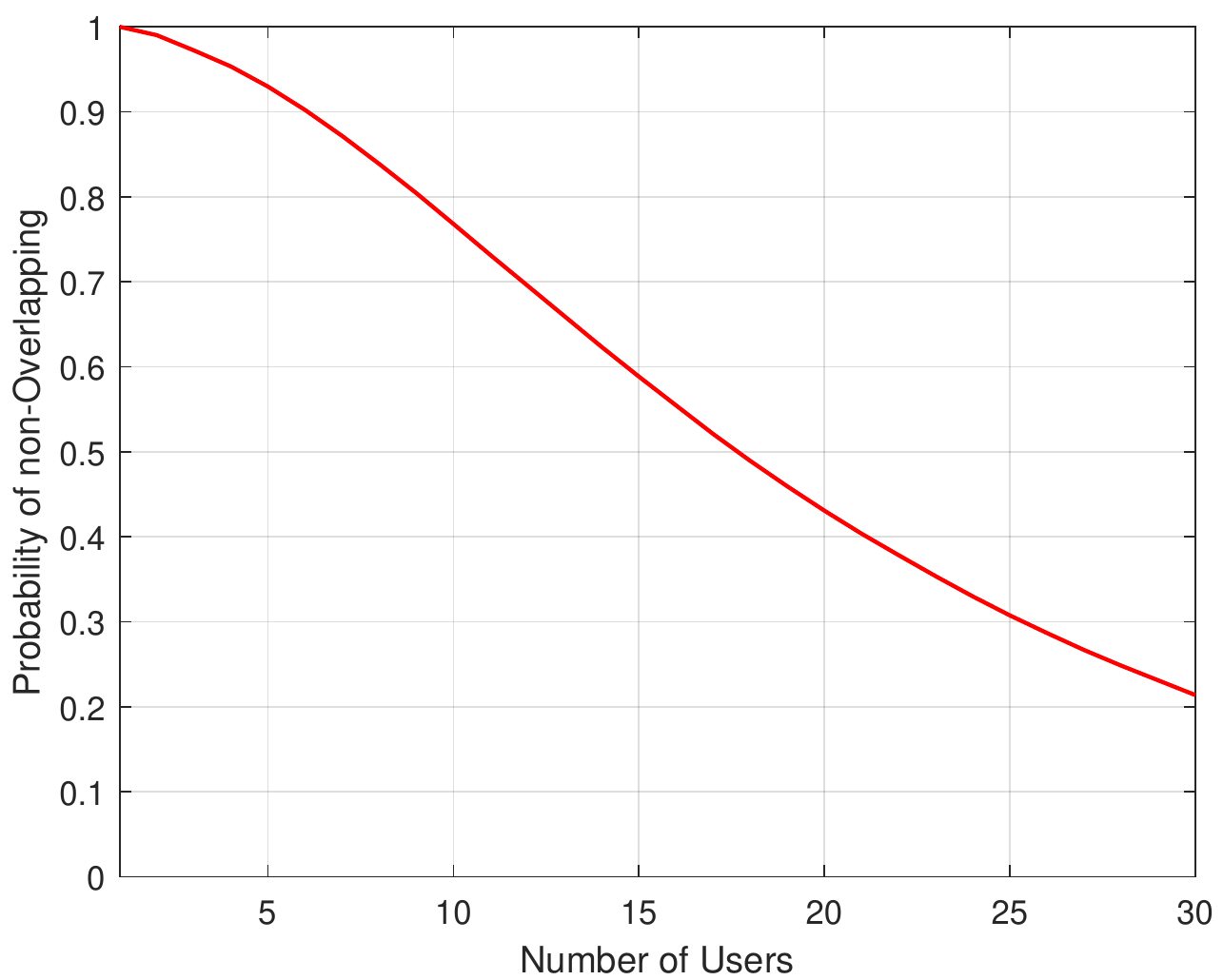}}
\subcaptionbox{Probability of non-overlapping vs. coverage radius \label{fig:Pdis}}
{\includegraphics[width=.45\linewidth]{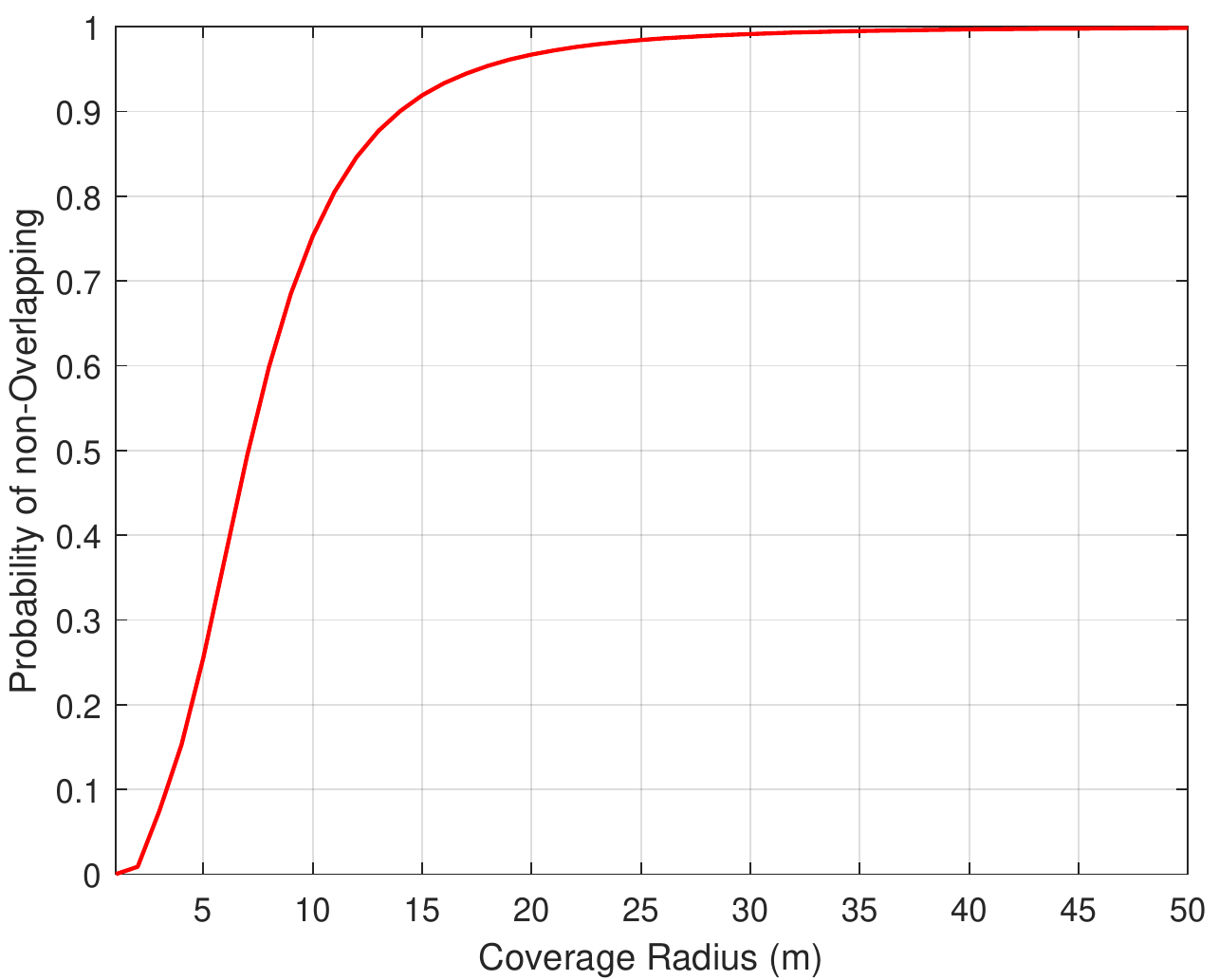}}
\subcaptionbox{Probability of non-overlapping vs. HAP transmit power $P_h$\label{fig:Phap}}
{\includegraphics[width=.45\linewidth]{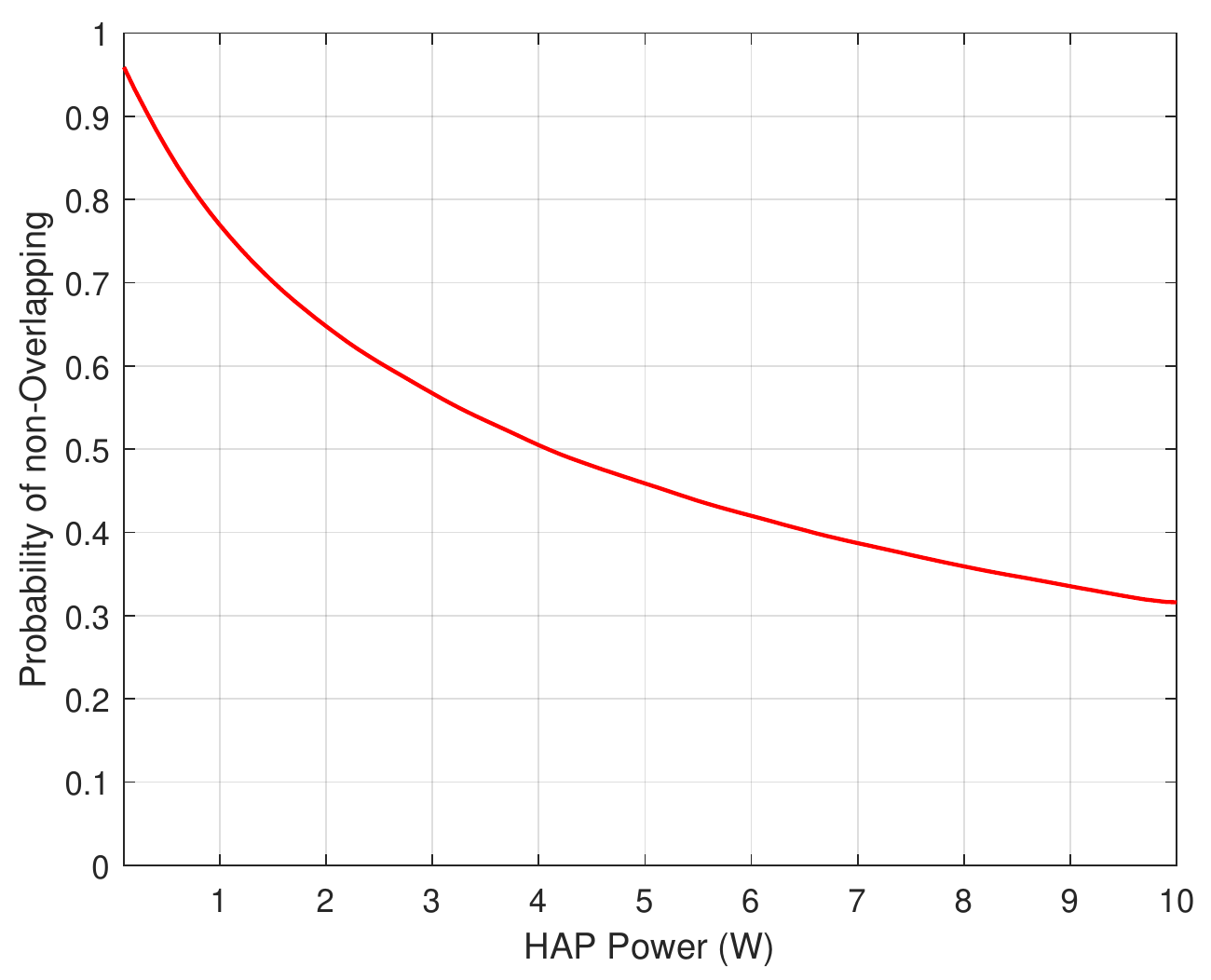}}
\subcaptionbox{Probability of non-overlapping vs. pathloss exponent\label{fig:Palpha}}
{\includegraphics[width=.45\linewidth]{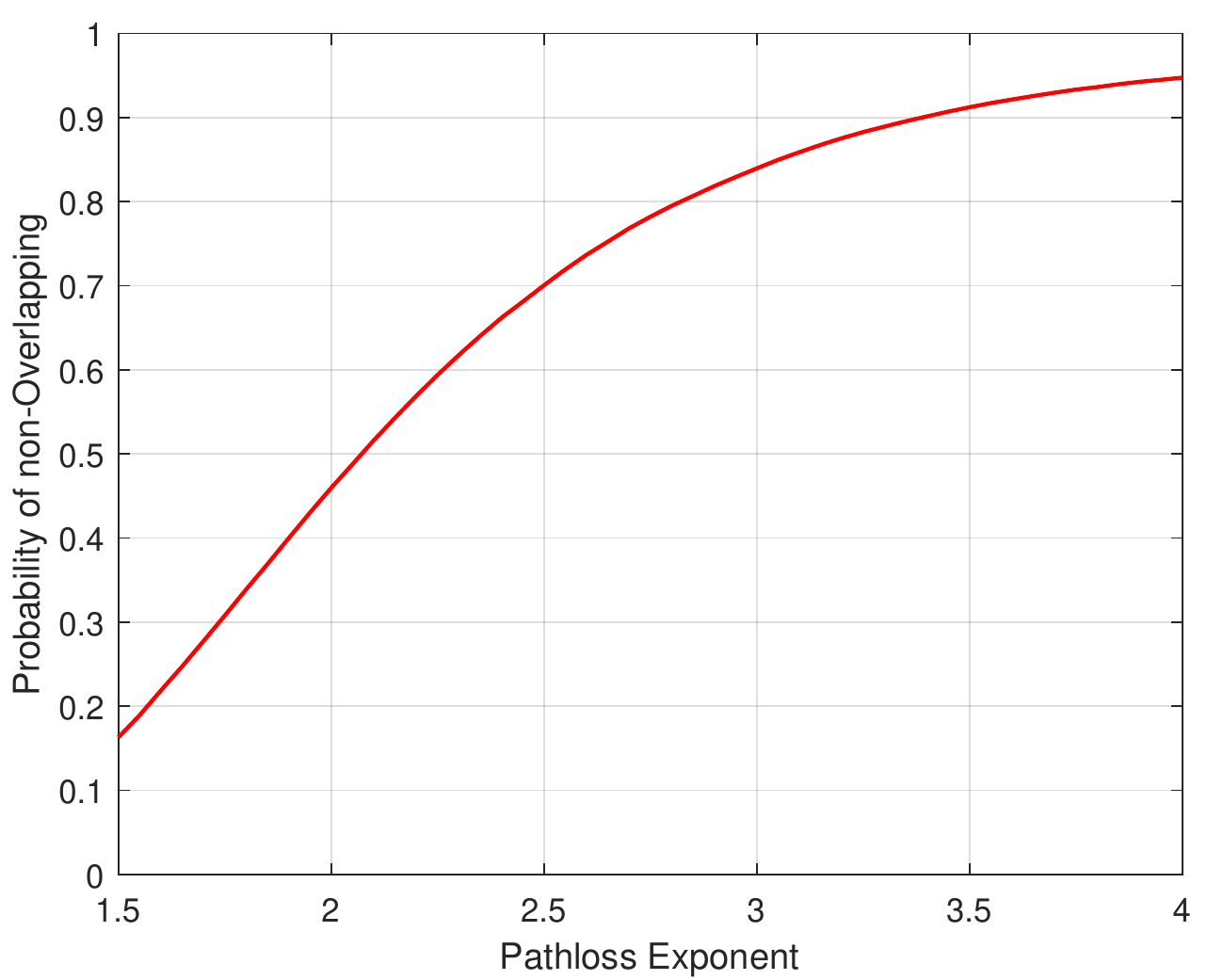}}
\caption{Probability of non-overlapping MLS occurrences}\label{fig:ProbAnalysis}
\end{figure}
%
%\begin{figure} 
%\begin{subfigure}{.5\textwidth}
%  \centering
%  % include first image
%  \includegraphics[width=.8\linewidth]{probn.pdf}  
%\end{subfigure}
%\begin{subfigure}{.5\textwidth}
%  \centering
%  % include second image
%  \includegraphics[width=.8\linewidth]{probdis.pdf}  
%\end{subfigure}
%\newline
%\begin{subfigure}{.5\textwidth}
%  \centering
%  % include third image
%  \includegraphics[width=.8\linewidth]{probhap.pdf}  
%\end{subfigure}
%\begin{subfigure}{.5\textwidth}
%  \centering
%  % include fourth image
%  \includegraphics[width=.8\linewidth]{probalpha.pdf}  
%\end{subfigure}
%\caption{Probability of overlapping MLS occurrences}  \label{fig:MLSprobs}
%\end{figure}

Fig. \ref{fig:ProbAnalysis} illustrates the probability of having non-overlapping MLS slots in various network scenarios. Fig. \ref{fig:Pn} shows the behaviour of MLS slot when the network size increase. As for a single user there is no chance of overlapping, therefore probability of non-overlapping starts from one and decreases gradually as the number of users increase in the network because for more users there are more chances that two users may have MLS slot in the same region. \ref{fig:Pdis} shows that as the coverage of the network increases, the users can be more distant from the HAP and therefore, the channel gain reduces which results in a low energy harvesting rate. This low energy harvesting rate increases the energy difference among the users which results in a high probability of non-overlapping MLS slot. \ref{fig:Phap} shows that as the HAP power is increased the energy harvesting rate increases which is directly proportional to the HAP power. This increase allows the user to reach the highest SNR level quickly. Therefore, the probability of getting a non-overlapping MLS slots decreases. Similarly, \ref{fig:Palpha} shows that the high path-loss exponent will lead to high probability of getting non-overlapping slots due to low channel gains and low energy harvesting rates. Fig. \ref{fig:ProbAnalysis} that having a non-overlapping scenario, for which the optimal polynomial-time solution is given by OSNS Algorithm, is not a non-frequent scenario. For practical HAP power values, almost $75\%$ of the problem instances are polynomial-time solvable. While it is observed that the probability of having a non-overlapping instance decreases as the number of users increases, a large subset of instances fall in non-overlapping scenario for high path loss values (which can also correspond to fast variations in the channel) and for large coverage areas. Therefore, it is highly likely that a particular instance of $\cal{DR-MLSP}$ be polynomial-time solvable.
%%Some more discussion may be needed ???.

\subsection{Overlapping MLS Slots Scenario}
In this section, we investigate $\cal{DR-MLSP}$ problem for the overlapping MLS slots (overlapping) scenario. Overlapping MLS slots scenario is illustrated in Fig. \ref{fig:overlap} for a network of $4$ users. For instance, MLS slots of users $2$ and $1$ are overlapping meaning that at least one of the users cannot be allocated to its MLS slot.

\begin{figure}[t]
 \centering
\includegraphics[width= 0.5 \linewidth]{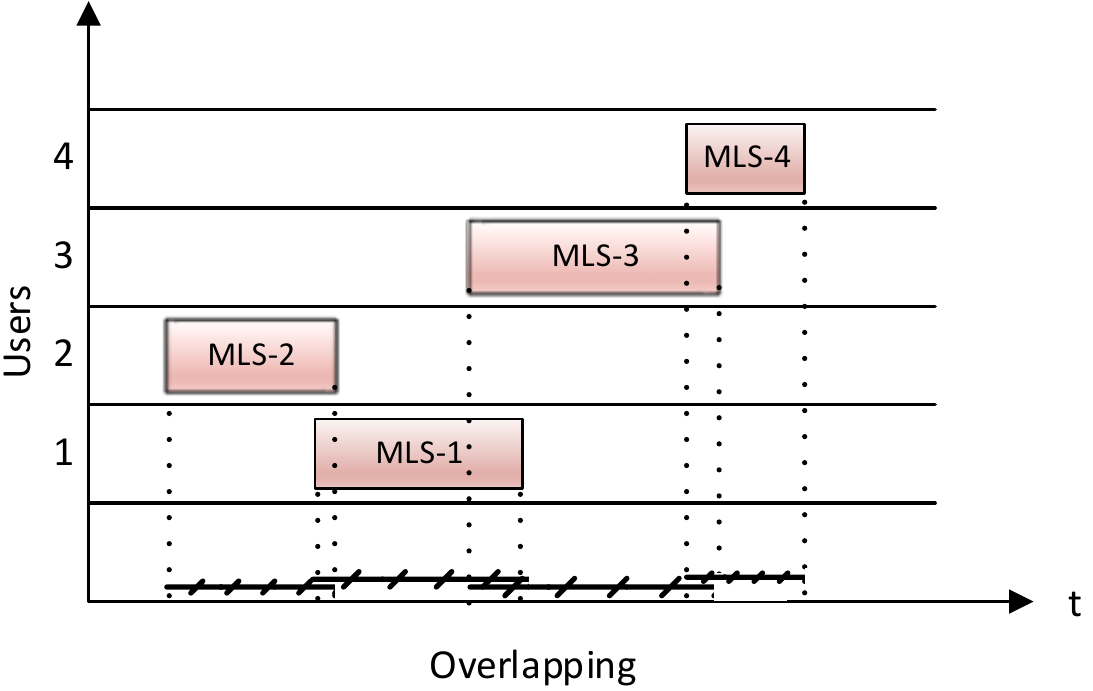}
\caption{Illustration of the overlapping MLS slots scenario.} \label{fig:overlap}
\end{figure}

While the overlapping MLS slots scenario is generally difficult to solve due to NP-hardness of the problem, some instances in this scenario may be still polynomial-time solvable using a polynomial-time conversion to a non-overlapping instance. In the following we propose an adjustment algorithm that determines the polynomial-time solvability of an overlapping MLS slots instance in polynomial-time.

The Polynomial-time Solvability Check Algorithm (PSCA), given in Algorithm \ref{algo_PSCA}, is described next. Input of PSCA algorithm is a set of users, denoted by $\cal{F}$. The algorithm starts by determining the MLS slots $[s_i^*,e_i^*]$ $\forall$ $i\in \cal{F}$ at $t=0$ (Line $3$). Then the users are sorted in increasing order of MLS slot starting times $s_i$ (Line $4$). At each step of the algorithm (Line $5$), PSCA determines whether subsequent MLS slots of two users are overlapping or not (Line $6$). If they are overlapping, the MLS slot of the later user is delayed such that its starting time is set to the completion time of the previous user (Lines $7-8$). If the MLS slot of the last user is not overlapping with the possibly adjusted time slot of the previous user (Line $11$), then set of users $\cal{F}$ is polynomial-time solvable (Line $12$).

\begin{algorithm} []
\caption{Polynomial-time Solvability Check Algorithm (PSCA)}  \label{algo_PSCA}
\begin{algorithmic}[1] 
\STATE \textbf{input:} $\cal{F}$
\STATE \textbf{output:} $[s_i^*,e_i^*]$ $\forall i\in \cal{F}$
\STATE determine MLS slots $[s_i^*,e_i^*]$ $\forall i\in \cal{F}$ at $t=0$,
\STATE  sort $s_i^*$ in increasing order s.t. $s_1^* < s_2^* < ... < s_{|\cal{F}|}^*$,
\FOR {$i=2: {|(\cal{F})|}-1$}
\IF {$s_i^*<e_{i-1}^*$}
\STATE set $s_i^*$ $\leftarrow$ $e_{i-1}^*$,
\STATE set $e_i^*$ $\leftarrow$ $e_{i-1}^*+\tau_i^*$,
\ENDIF
\ENDFOR
\IF {$s_{|(\cal{F})|}^*\geq e_{{|(\cal{F})|}-1}^*$}
\STATE $\cal{F}$ is polynomial-time solvable,
\ENDIF
\end{algorithmic}
\end{algorithm}

Note that if an overlapping instance of the problem is determined to be polynomial-time solvable by PSCA algorithm, then OSNS can be used to solve this instance optimally since the adjusted slots determined by PSCA constitutes a non-overlapping instance.

%\begin{lemma}
%A particular instance of $\cal{DR-MLSP}$ problem is polynomial-time solvable if it satisfies the flowing conditions:
%Conditionss??? Adjustability stuff goes here...
%\end{lemma}
%\begin{proof}
%proof goes here???
%\end{proof}
%The corresponding algorithms or adjusting mechanisms go here??? How to remove the overlapping regions to transform an overlapping scenario to an non-overlapping one???

%\begin{lemma}
%Bound on the minimum schedule length???
%\end{lemma}

%We should be focusing on the users that can achive decreased time slot when moved for adjustment???

%Consider the case in which time slots are fixed over the adjustment interval, then the problem is how to reallocate them??? Can we derive a condition mathematically???

%Investigation of the statistics for overlapping scenarios? What are the rates in MLS slots generally, what about the initial waiting times for the users? Is this overlapping scenario is a low probability scenario, is something easy to adjust????

%Besides statistics, can we do some probabilistic analysis for instance on the expected value of a non-overlapping scenario given certain parameters such as coverage region, data requirements, channel parameters, rate values, number of users, etc, at least for some typical or simple scenarios? (similar analysis can be made for continuous rate as a conference paper)...???

Non-overlapping instances and adjustable overlapping instances of $\cal{DR-MLSP}$ are polynomial-time solvable. However, scheduling the overlapping MLS slots optimally generally requires determining which slots will be delayed among the overlapping ones to minimize the overall schedule length. Note that the ending time of the last MLS slot among the users is the lower bound on the minimum schedule length. Hence, one needs to focus on allocating the users such that the user with the last MLS slot is delayed for a minimum duration. For example, if one can allocate the users in an order such that the last scheduled user is allocated to the MLS slot of that user at $t_{dec}=0$, then the resulting schedule is optimal. Hence, it is generally beneficial to allocate the user with the earliest current MLS slot at the decision of scheduling time. In the following, we propose a polynomial-time algorithm based on this idea.

The Earliest MLS Slot Algorithm (EMSA), given in Algorithm \ref{algo_EMSA}, is described next. Input of EMSA algorithm is a set of users, denoted by $\cal{F}$, with the characteristics specified in Section \ref{sec:system} (Line $1$). The algorithm starts by initializing the schedule $\cal{S}$ where the $i^{th}$ element of $\cal{S}$ is the index of the user scheduled in the $i^{th}$ time slot and the scheduling decision time  $t^{dec}$ (Line $3$). At each step of the algorithm, EMSA determines the MLS slots for the unallocated users at $t^{dec}$ (Line $5$), and
picks the user with the earliest MLS slot starting time $s_i^*$ (Line $6$). Then, this user is allocated to its MLS slot starting at $s_k^*$ and completed at $e_k^*$ (Lines $7-8$). Algorithm continues by updating the scheduling decision time (Line $9$) and giving scheduling decisions for the remaining users (Lines $4-10$). EMSA terminates when all users in $\cal{F}$ are scheduled and outputs the schedule $\cal{S}$ with schedule length $t(\cal{S})$ (Line $11$).

\begin{algorithm} [t]
\caption{Earliest MLS Slot Algorithm (EMSA)}  \label{algo_EMSA}
\begin{algorithmic}[1] 
\STATE \textbf{input:} $\cal{F}$
\STATE \textbf{output:} $\cal{S}$, $t(\cal{S})$
\STATE $\cal{S}$ $\leftarrow$ $\emptyset$, $t^{dec}$ $\leftarrow$ 0,
\WHILE {$\textbf{F} \neq \emptyset$}
\STATE determine MLS slots for all $i\in \cal{F}$ at $t^{dec}$,
\STATE $k \leftarrow$ arg$\min_{i\in \cal{F}} s_i^*(t^{dec})$,
\STATE $\cal{S}$ $\leftarrow$ $\cal{S}$ + $\lbrace k\rbrace$,
\STATE $\cal{F}$ $\leftarrow$ $\cal{F}$ - $\lbrace k \rbrace$,
\STATE $t^{dec}$ $\leftarrow$ $e_k^*(t^{dec})$,
\ENDWHILE
\STATE $t(\cal{S})$ $\leftarrow$ $t^{dec}$,
\end{algorithmic}
\end{algorithm}

\section{Performance Evaluation} \label{sec:simulation}
The goal of this section is to evaluate the performance of the proposed scheduling algorithm named as generic algorithm in comparison to the pre-determined transmission order denoted by PDO. The PDO aims at minimizing the schedule length for a given transmission order of the users by allocating each user as early as possible, without considering scheduling as given in \ref{algo_PDO}. The generic algorithm uses all the proposed scheduling algorithms to find the best possible schedule in polynomial time complexity. In the first step the algorithm evaluates the MLS slots for all the users followed by determining the scenario of the slots by checking either they are overlapping or non overlapping. For the non-overlapping slots, the algorithm will use the OSNS algorithm presented in Algorithm \ref{algo_OSNS} to find the optimal schedule and corresponding schedule length. However, if the slots are overlapping then the generic algorithm will go for polynomial-time solvability check by using the Algorithm \ref{algo_PSCA} to find that the slots can be adjusted optimally or not by using the PSCA given in Algorithm \ref{algo_PSCA}. For the overlapping slots if the slots can not be adjusted then the generic algorithm will use earliest MLS slot algorithm presented in Algorithm \ref{algo_EMSA}to determine the schedule. The optimal solution is obtained by a brute force algorithm, denoted by BFA, enumerating all the possible transmission orders and then picking the best schedule with minimum length. Considering the exponential complexity of BFA, we have executed it only for seven users i.e. $N=7$ to show the performance of the proposed algorithms compared to the optimal one.

\subsection{Simulation Setup}
Simulation results are obtained by averaging over $1000$ independent random network realizations. The attenuation of the links considering large-scale statistics are determined using the path loss model given by 
\begin{equation}
PL(d)=PL(d_0)+10\alpha log_{10}\bigg(\frac{d}{d_0}\bigg)+\emph{Z}
\end{equation}
where $PL(d)$ is the path loss at distance $d$ in $dB$, $d_0$ is the reference distance, $\alpha$ is the path loss exponent, and $Z$ is a zero mean Gaussian random variable with standard deviation $\sigma$. The small-scale fading has been modelled by using Rayleigh fading with scale parameter  $\Omega_i$ set to mean power level obtained from the large-scale path loss model. The parameters used in the simulations are $\eta_i=1$ for $i \in [1,N]$; $D_i$ is picked randomly between $[100 \hspace*{0.1cm} 10000]$ bits for $i \in [1,N]$; $W= 1$ MHz; $d_0=1$ m; $PL(d_0)=30$ dB; $\alpha=2.76$, $\sigma=4$ \cite{harvest_50}. The self interference coefficient $\beta$ is taken as $-80$ dBm. We use $M=5$ discrete rates with values $10kbps, 20kbps, \cdots , 50kbps$ and corresponding SNIR levels. For non-linear energy harvesting model, $P_s=7mW$, $A=1500$ and $B=.0022$ \cite{NLEH_parameter}.

 \begin{figure}[t]
\centering
\includegraphics[width= 0.5 \linewidth]{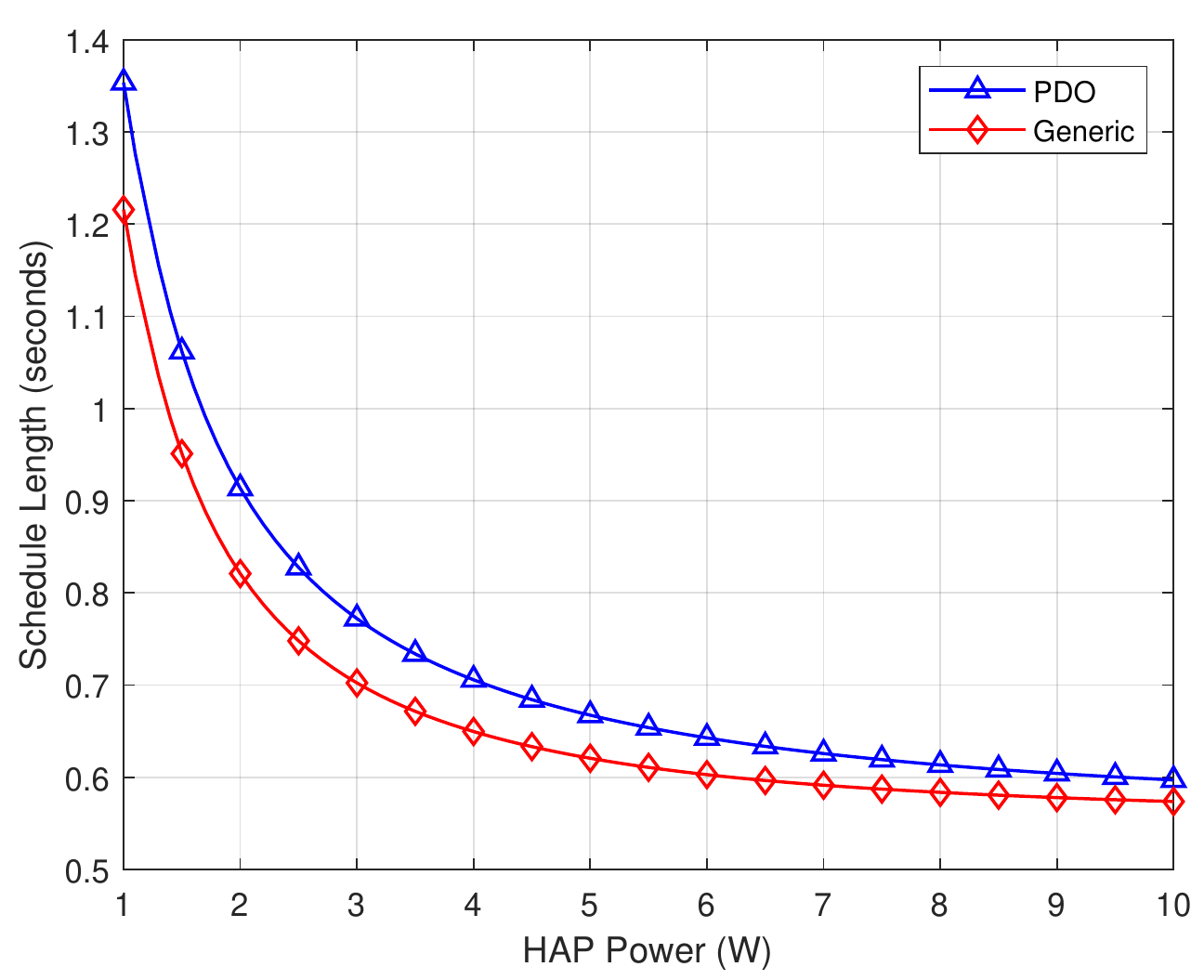}
\caption{Schedule length vs. HAP trasmit power $P_h$} \label{fig:Ph}
\end{figure}

\subsection{Scheduling Performance}

Fig. \ref{fig:Ph} illustrates the performance of the proposed generic algorithm for different values of the HAP transmit power for a network of 50 users assuming zero initial battery level. The schedule length is high for lower values of the $P_h$ and decreases as the $P_h$ increases because for high $P_h$ values, users can quickly achieve the higher SNR level resulting in faster data transmission and lower schedule length. The generic algorithm outperforms the PDO significantly for wide range of HAP power $P_h$ values. It is important to notice that for lower values of $P_h$, scheduling is more critical due to the fact that the users with low energy level will need more time to achieve the higher data rate. Therefore, if such users are scheduled in the start, they will be transmitting at lower rate which will result in higher transmission time. Therefore, delaying such users will help in schedule length minimization. The impact of the scheduling reduces as the $P_h$ is increased, because for high $P_h$ values all the users are harvesting energy at higher rate which facilitates them in achieving higher SNR values. If $P_h$ is increased enough so that all the users can feasible afford highest data rate then the schedule length for PDO and generic algorithm will converge to each other which removes the need of scheduling. Note that the lower bound on the schedule length is $\sum_{i=1}^{N}\tau_{i}^{min}$, where $\tau_{i}^{min}$ is the transmission time of user $i$ corresponding to the highest transmission rate i.e. $50kbps$ in this case.      
 \begin{figure}[t]
 \centering
\includegraphics[width= 0.5 \linewidth]{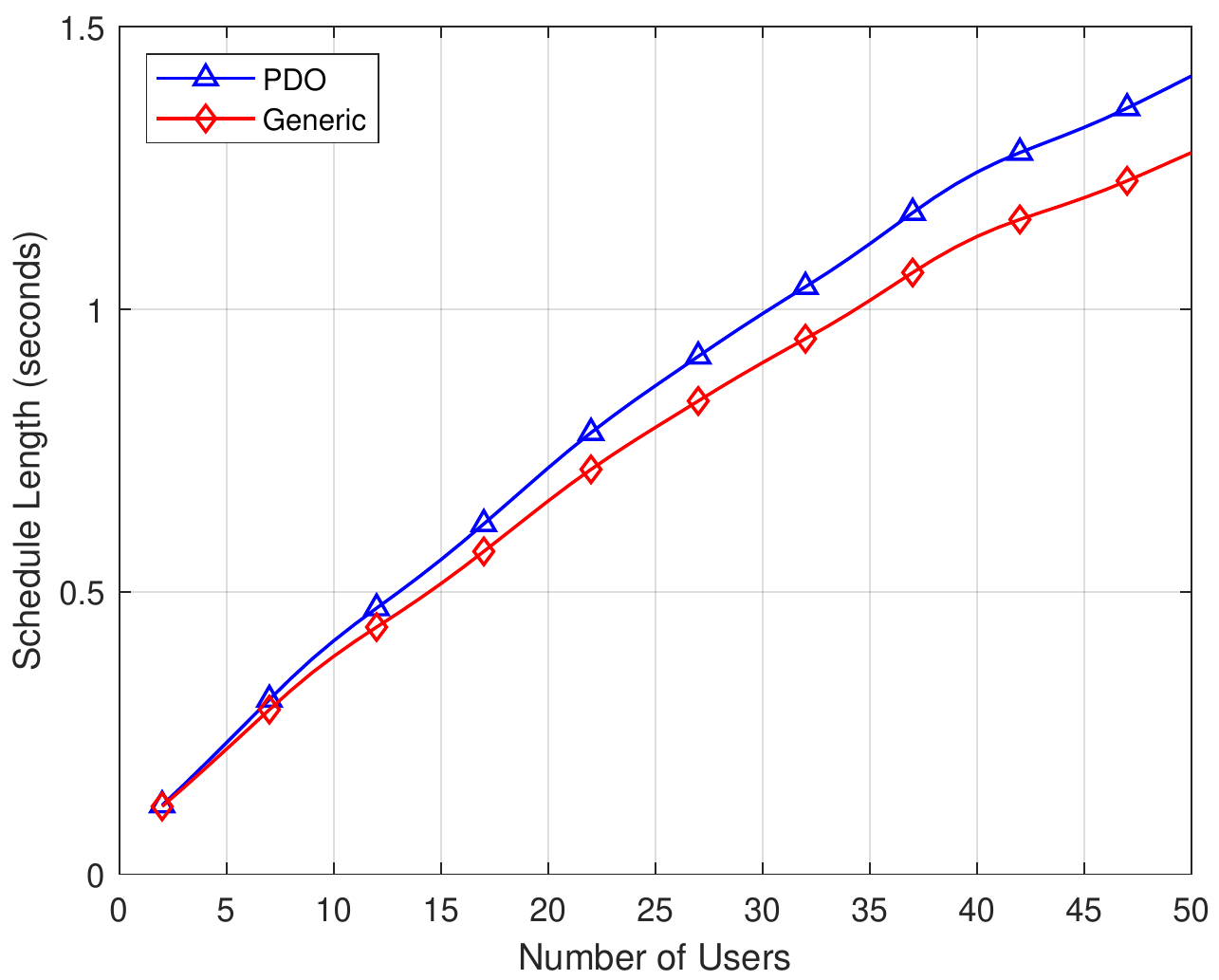}
\caption{Schedule length vs. number of users} \label{fig:N}
\end{figure}

Fig. \ref{fig:N} illustrates the impact of scheduling on the network size by comparing the PDO and generic algorithm for different number of users in the network for $P_h=1W$ and zero initial battery level. Note that due to exponential complexity of the brute force enumeration, we have skipped the BFA in this analysis and at the end we have presented it for lesser number of users. The figure illustrate that for a small network, the addition of each new user causes a linear increase in the schedule length. However as the network size increase the effect on the schedule length start diminishing because for a large network, the probability of finding a user with higher transmission rate increase for every slot and the users with low energy level will get more time to harvest the energy which will allow them as well to achieve high SNR level. Furthermore, as the number of users increase in the network, the scheduling becomes more critical because an arbitrary transmission order may result an unnecessary delay in the transmission of high rate users even if they do not need to harvest more energy. Moreover, increasing number of users puts an arbitrary transmission order further away from the optimality hence increasing the sub optimality of PDO. On the other hand, if the users are scheduled properly by eliminating the unnecessary waiting intervals for achieving higher rates as in the generic algorithm, the optimality performance is preserved showing the robustness of proposed algorithm to the network size. Note that robustness to the network size is very important for future networks with high number of machine type devices or sensors.

 \begin{figure}[t]
 \centering
\includegraphics[width= 0.5 \linewidth]{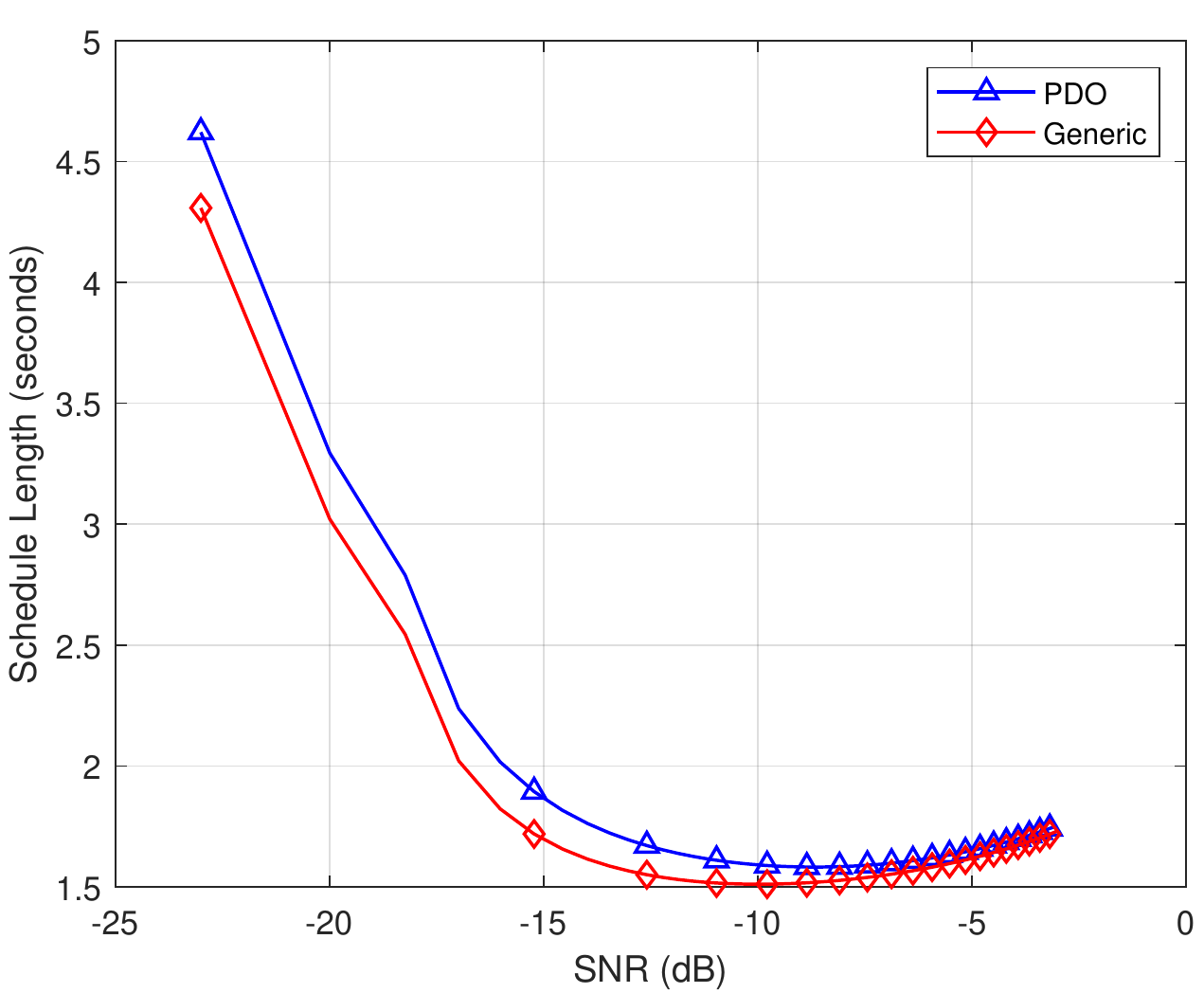}
\caption{Schedule length vs SNR values} \label{fig:SNR}
\end{figure}

Fig. \ref{fig:SNR} illustrates the behaviour of the network for different required SNR levels for minimum rate transmissions. This system can be assumed as constant rate model in which all the users are supposed to transmit their information by using a fix SNR level. It is observed that initially, increasing the minimum SNR level results in reduction of the schedule length. This reduction is due to the fact that the users may harvest more then the needed energy even during the transmission of first user and later on, all the users are just waiting for their transmission time. Due to low value of the allowed SNR, the users will be under-utilizing their energy. Therefore, increase in the minimum allowed data rate will allow the remaining users to fully utilize their harvested energy which will lead to a reduction in schedule length. However, after a certain time, the schedule length starts increase, this increase in the schedule length is due to the longer waiting time of the users to achieve the required SNR level.

 \begin{figure}[t]
 \centering
\includegraphics[width= 0.5 \linewidth]{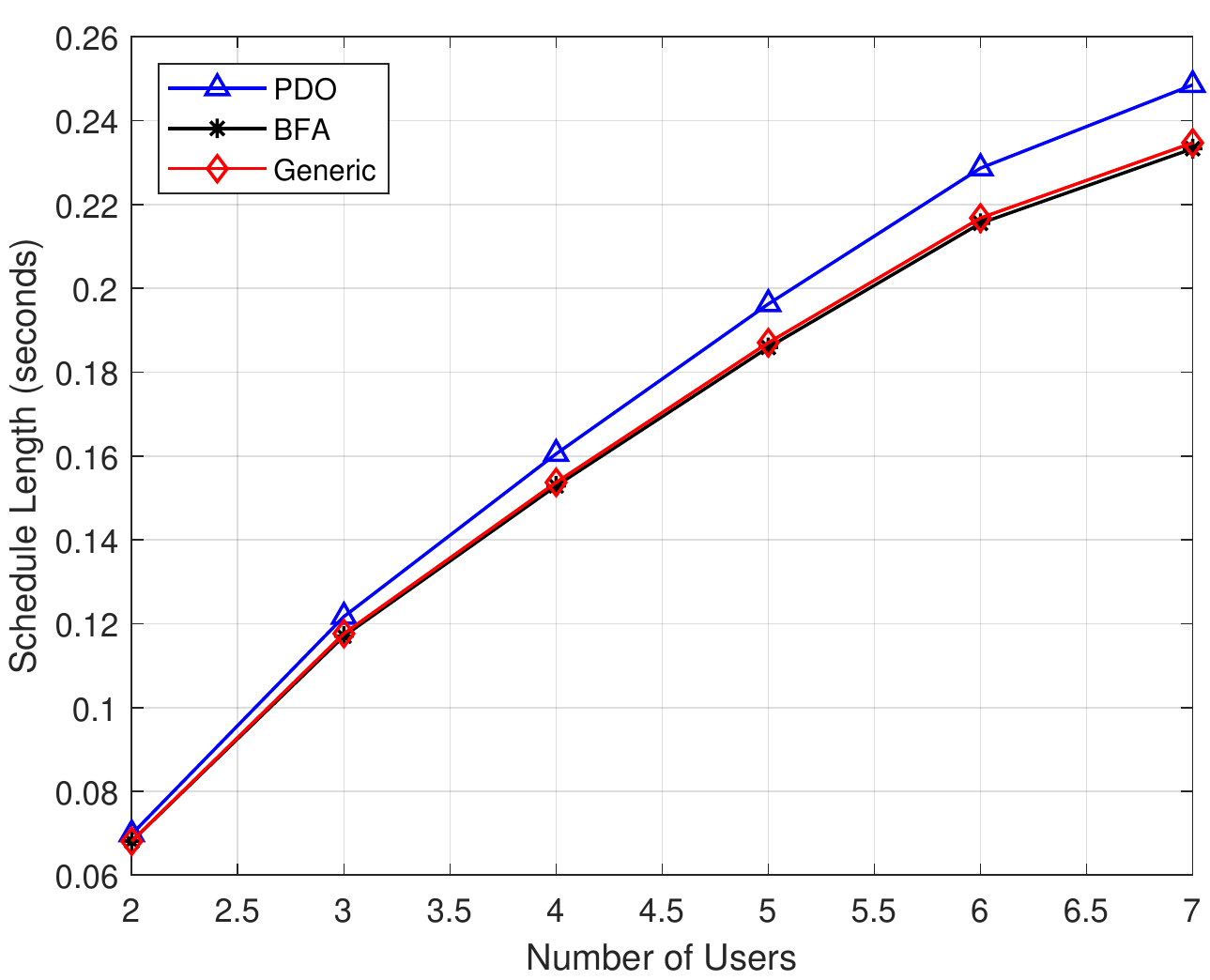}
\caption{Schedule length vs SNR values} \label{fig:SNR}
\end{figure}
Finally, Fig. \ref{fig:SNR} illustrates the comparison of the proposed algorithms with the optimal solution obtained by enumerating all the possible schedules and selecting the one with minimum length. As mentioned in the probabilistic analysis of the proposed problem, most scenarios of the problem are optimally polynomial time solvable except the few overlapping non-adjustable scenarios. Due to this reason the proposed generic algorithm performs very close to the optimal with a significant low complexity. For smaller number of users, the slots are non-overlapping with a high probability therefore the proposed algorithm is performing very close to the optimal. However,as the network size increases the proposed algorithm deviates a little from the optimal because for high number of users, the probability of overlapping non-adjustable scenario increases.  
\section{Conclusion and Future Work} \label{sec:conclusion}
In this paper, we have investigated minimum length scheduling problem considering discrete rate transmission model in a full duplex wireless powered communication network. We have characterized an optimization framework to determine the optimal time allocation, rate adaptation and scheduling subject to maximum transmit power, traffic demand and energy causality requirements of the users. First we have mathematically formulated the problem as a mixed integer non-linear programming problem which is difficult to solve for the global optimum in polynomial time. It is proved that the formulated problem is NP-Hard. In order to solve the problem fast and efficiently, the problem is analysed extensively and based on this analysis we have introduced the minimum length slot. All the network scenarios are divided into overlapping and non-overlapping scenarios. For non-overlapping scenarios, we have proposed optimal solution whereas, the overlapping scenarios are further analysed for adjustable and non-adjustable scenarios. For adjustable topologies, an optimal algorithm is proposed and as the non-adjustable scenarios can not be solved optimally in polynomial time, we have proposed a heuristic algorithm for such scenarios which perform very close to the optimal solution.   

\bibliography{bib_shahid}
\bibliographystyle{ieeetr}
\end{document}